\newcommand{\be}{\begin{equation}}
\newcommand{\ee}{\end{equation}}
\newcommand{\ba}{\begin{eqnarray}}
\newcommand{\ea}{\end{eqnarray}}
\newcommand{\ketbra}[2]{|#1\rangle \langle #2|}
\newcommand{\tr}{\operatorname{Tr}}
\newtheorem{theorem}{Theorem}
\newtheorem{proposition}{Proposition}
\def\>{\rangle}
\def\<{\langle}
\begin{document}

\title{Certifying beyond quantumness of locally quantum no-signalling theories through quantum input Bell test}

\author{ Edwin Peter Lobo}
\affiliation{School of Physics, IISER Thiruvananthapuram, Vithura, Kerala 695551, India.}

\author{Sahil Gopalkrishna Naik}
\affiliation{Department of Physics of Complex Systems, S.N. Bose National Center for Basic Sciences, Block JD, Sector III, Salt Lake, Kolkata 700106, India.}	

\author{Samrat Sen}
\affiliation{Department of Physics of Complex Systems, S.N. Bose National Center for Basic Sciences, Block JD, Sector III, Salt Lake, Kolkata 700106, India.}

\author{Ram Krishna Patra}
\affiliation{Department of Physics of Complex Systems, S.N. Bose National Center for Basic Sciences, Block JD, Sector III, Salt Lake, Kolkata 700106, India.}

\author{Manik Banik}
\affiliation{Department of Physics of Complex Systems, S.N. Bose National Center for Basic Sciences, Block JD, Sector III, Salt Lake, Kolkata 700106, India.}

\author{Mir Alimuddin}
\affiliation{Department of Physics of Complex Systems, S.N. Bose National Center for Basic Sciences, Block JD, Sector III, Salt Lake, Kolkata 700106, India.}	

\begin{abstract}
Physical theories constrained with local quantum structure and satisfying the no-signalling principle can allow beyond-quantum global states. In a standard Bell experiment, correlations obtained from any such beyond-quantum bipartite state can always be reproduced by quantum states and measurements, suggesting local quantum structure and no-signalling to be the axioms to isolate quantum correlations. In this letter, however, we show that if the Bell experiment is generalized to allow local quantum inputs, then beyond-quantum correlations can be generated by every beyond-quantum state. This gives us a way to certify beyond-quantumness of locally quantum no-signalling theories and in turn suggests requirement of additional information principles along with local quantum structure and no-signalling principle to isolate quantum correlations. More importantly, our work establishes that the additional principle(s) must be sensitive to the quantum signature of local inputs. We also generalize our results to multipartite locally quantum no-signalling theories and further analyze some interesting implications. 
\end{abstract}

%\pacs{03.65.Ta,03.65.Ud, 03.67.Dd}
%\keywords{}

% 03.65.Ta	Foundations of quantum mechanics;
% 03.67.Dd	Quantum cryptography and communication security
% 03.67.Hk	Quantum communication
% 03.65.Ud	Entanglement and quantum nonlocality

\maketitle

{\it Introduction.--} Correlations among distant events established through the violation of Bell type inequalities confirm nonlocal behavior of the physical world \cite{Bell64,Bell66,Mermin93,Brunner14}. Nonseparable multipartite quantum states yielding such correlations, in Schr\"{o}dinger's words, are ``...the characteristic trait of quantum mechanics, the one that enforces its entire departure from classical lines of thought" \cite{Schrodinger35}. The advent of quantum information science identifies the power of such nonlocal correlations in numerous device independent protocols -- cryptographic key distribution \cite{Barrett05,Acin06,Acin07}, randomness certification \cite{Pironio10} and amplification \cite{Colbeck12}, dimension witness \cite{Brunner08,Gallego10,Mukherjee15} are few canonical examples. Cirel'son's result \cite{Cirelson80}, however, establishes that the nonlocal strength of quantum correlations is limited compared to the general {\it no-signalling} (NS) ones \cite{Popescu94} as depicted in the celebrated Clauser-Horne-Shimony-Holt (CHSH) inequality violation \cite{Clauser69}.
	
To comprehend the limited nonlocal behavior of quantum theory and to obtain a better understanding of the theory itself, researchers have proposed several approaches to compare and contrast quantum theory with other conceivable physical theories constructed within more general mathematical frameworks \cite{Birkhoff36,Mackey63,Ludwi68,Mielni68,Beltramett81,Soler95,Haag96,Clifton03,Barrett07,Abramsky08,Chiribella11}. Here, we consider a class of theories wherein local measurements are described quantum mechanically, but they allow global structure more generic than quantum theory \cite{Foulis80,Klay87,Wallach00,Barnum05,Barnum10,Acin10,Torre12,Kleinmann13}. Gleason-Busch celebrated result in quantum foundations proves that any map from generalized measurements to probability distributions can be written as the trace rule with the appropriate quantum state \cite{Gleason57,Busch03} (see also \cite{Caves04}). This theorem, when appraised to the case of local observables acting on multipartite  systems, hence called the unentangled Gleason's theorem, endorses the joint NS probability distributions to be obtained from some Hermitian operator called the positive over all pure tensors (POPT) state \cite{Foulis80,Klay87,Wallach00,Barnum05}. Although the set of POPT states is strictly larger than the set of quantum states (density operators), in a recent work, Barnum {\it et al.} have shown that the set of bipartite correlations attainable from the POPT states is precisely the set of quantum correlations \cite{Barnum10}. Consequently, their result provokes a far-reaching conclusion "... that if nonlocal correlations beyond quantum mechanics are obtained in any experiment then quantum theory would be invalidated even locally." 

In this letter, we analyze the correlations of multipartite POPT states obtained from local measurements performed on their constituent parts by considering a generalized Bell scenario as introduced in \cite{Buscemi12}. While in the standard Bell scenario spatially separated parties receive some classical inputs and accordingly generate some classical outputs by performing local measurements on their respective parts of some composite system, recently Buscemi has generalized the scenario where the parties receive quantum inputs instead of classical variables \cite{Buscemi12}. In this generalized scenario he has shown that all entangled states exhibit nonlocality, despite some of them allowing local-hidden-variable (LHV) model in classical input scenario \cite{Werner89,Barrett02,Rai12}. Considering this generalized scenario, here we show that not all correlations obtained from bipartite POPT states are quantum simulable. In fact, every beyond quantum POPT state produces some beyond quantum correlation in some quantum input game. On the other hand, to illustrate the limitations of the standard Bell scenario, we show that there are POPT states which produce classical-input-classical-output correlations that are not only quantum simulable, rather simulable classically. Our result shows that the {\it strong} claim made by the authors in \cite{Barnum10} will not be correct anymore in this generalized Bell scenario which, as we will show, is allowed within the framework of local quantum theory. From a foundational perspective our study welcomes new information principles incorporating this generalized Bell type scenario to isolate quantum correlation from beyond-quantum ones. We also analyze the implication of this generalized scenario for multipartite correlations and answer an open question raised in \cite{Acin10}.

{\it Gleason's theorem.--} We investigate the class of locally quantum theories studied in a series of works in the recent past \cite{Foulis80,Klay87,Wallach00,Barnum05,Barnum10,Acin10,Torre12,Kleinmann13}. In accordance with these works, we say that Alice is {\it locally quantum} if her physical system is described by a Hilbert space $\mathcal{H}_A$ with dimension $d_A$ and her measurements $M_A$ are given by a collection of effects corresponding to positive-operator-valued measurement (POVM) \cite{Kraus83} operators $\{\pi^a_A\}_a$ acting on $\mathcal{H}_A$ and satisfying the constraint $\sum_a\pi^a_A=\mathbb{I}_A$; where $\forall~a,~\pi^a_A\in\mathcal{E}(\mathcal{H}_A)\subset\mathcal{L}(\mathcal{H}_A)$, with $\mathcal{E}(\mathcal{H}_A)$ and $\mathcal{L}(\mathcal{H}_A)$ respectively denoting the set of all positive operators and bounded linear operators acting on $\mathcal{H}_A$; and $\mathbb{I}_A$ is the identity operator on  $\mathcal{H}_A$. The probability $p(a|M_A)$ that Alice obtains an outcome $a$ for measurement $M_A\equiv\{\pi^a_A\}_a$ is given by a  generalized probability measure $\mu:\mathcal{E}(\mathcal{H}_A)\mapsto[0,1]$, satisfying the properties (i) $\forall~\pi^a_A\in\mathcal{E}(\mathcal{H}_A),~0\le\mu(\pi^a_A)\le1,~$ (ii) $\mu(\mathbb{I}_A)=1$, and (iii) $\mu(\sum_i\pi^i_A)=\sum_i\mu(\pi^i_A)$ for any sequence $\pi^1_A,\pi^2_A\cdots$ with $\sum_i\pi^i_A\le \mathbb{I}_A$. Each probability measure $\mu$ corresponds to a `state' in the local quantum theory. We can make the association with the familiar quantum theory in which states are described by density operators by invoking the Gleason-Busch theorem according to which any such generalized probability measure is given by a linear functional of the form $\mu(\pi^a_A)=\tr(\rho_A\pi^a_A)$, for some density operator $\rho_A\in\mathcal{D}(\mathcal{H}_A)$; $\mathcal{D}(\mathcal{H}_A)$ denotes the set of positive operators with unit-trace on $\mathcal{H}_A$.

Interesting situations arise when the theorem is generalized to the case of local observables acting on multipartite systems. Each party is assumed to be locally quantum as described above, with the $i^{th}$ party performing the measurement $M_{A_i} \equiv\{\pi^a_{A_i}\}_a $. The `state' is now given by a probability measure $\mu: \times_{i=1}^n \mathcal{E}(\mathcal{H}_{A_i}) \mapsto [0,1]$. According to Unentangled Gleason's theorem \cite{Foulis80,Klay87,Wallach00,Barnum05}, any such functional $\mu$ satisfying the no-signalling condition is of the form $\mu(\pi^{a_1}_{A_1}, \cdots , \pi^{a_n}_{A_n})=\tr[W(\pi^{a_1}_{A_1} \otimes \cdots \otimes \pi^{a_n}_{A_n})]$, where $W$ is a Hermitian, unit trace operator. Thus, the `states' of multipartite locally quantum theory are in one-to-one correspondence with the operators $W$. $W$, being positive over all pure tensors, is called a POPT state. However, positivity of $W$ over entangled effects is not assured and such a non-positive $W$ can act as an entanglement witness operator \cite{Guhne09}. The set of POPT states $\mathcal{W}(\bigotimes_i\mathcal{H}_{A_i})$ includes $\mathcal{D}(\bigotimes_i\mathcal{H}_{A_i})$ as a proper subset and a $W$ will be called `beyond quantum state' (BQS) whenever $W\in\mathcal{W}(\bigotimes_i\mathcal{H}_{A_i})\setminus\mathcal{D}(\bigotimes_i\mathcal{H}_{A_i})$. With an aim to study the correlations obtained from BQSs we briefly recall the standard Bell scenario.  

{\it Standard Bell scenario.--} A multipartite Bell scenario can be described as the following Prover-Verifier task. $n$ distant Verifiers $A_1, A_2, \cdots, A_n$ have their own source of classical indices $s_i\in\mathcal{S}_i$. With the aim to verify some global property of a composite state prepared by a powerful but untrustworthy Prover, they send their respective indices as inputs to spatially separated subsystems of the composite systems. Classical outputs $a_i\in\mathcal{O}_i$ are generated from the different subsystems of the composite system and accordingly some payoff $\mathscr{P}:\times_{i=1}^n(\mathcal{S}_i\times\mathcal{O}_i)\mapsto\mathbb{R}$ is calculated. An implicit rule is that no communication is allowed among different subsystems once the game starts. Upon playing the game sufficiently many times, the input-output correlation $P\equiv\{p(a_1\cdots a_n|s_1\cdots s_n)\}^{a_i\in\mathcal{O}_i}_{s_i\in\mathcal{S}_i}$ is obtained. The collection of all NS correlations forms a convex polytope $\mathcal{NS}$. A correlation is called classical if and only if it is of the form $p_L(a_1\cdots a_n|s_1\cdots s_n)=\int_\Lambda p(\lambda)\prod_ip(a_i|s_i,\lambda)d\lambda$, where $\lambda\in\Lambda$ is some classical variable shared among the parties. Collection of such correlations also forms a convex polytope $\mathcal{L}$. On the other hand, a correlation is called quantum if it is obtained from some quantum state through local measurements, {\it i.e.} $p_Q(a_1\cdots a_n|s_1\cdots s_n)=\tr[\rho(\bigotimes_i\pi^{a_i}_{s_i})]$ for some $\pi^{a_i}_{s_i}\in\mathcal{E}(\mathcal{H}_{A_i})$ and $\rho\in\mathcal{D}(\bigotimes_i\mathcal{H}_{A_i})$. The set of all quantum correlations $\mathcal{Q}$ forms a convex set but not a polytope. The framework of locally quantum theories allows us to define the correlation set obtained from the POPT states. Following the terminology of Ref. \cite{Acin10} we call such a correlation 'Gleason correlation' and denote the set as $\mathcal{GL}$. The following set inclusion relations have been established: $\mathcal{L}\subsetneq\mathcal{Q}\subseteq\mathcal{GL}\subsetneq\mathcal{NS}$. While the first proper inclusion follows from the seminal work of Bell \cite{Bell64}, the last one is due to Cirel'son and Popescu-Rohrlich \cite{Cirelson80,Popescu94}. On the other hand the equality $\mathcal{Q}=\mathcal{GL}$ for bipartite correlations is established in \cite{Barnum10}. More precisely, the authors in \cite{Barnum10} have shown that for every POPT $W_{AB}$ and for every local measurements $M_A=\{\pi^a_A\}_a$ and $M_B=\{\pi^b_B\}_b$, there exists a quantum state $\rho_{AB} \in \mathcal{D}(\mathcal{H}_A \otimes \mathcal{H}_B)$ and measurements $\tilde{M}_A=\{\tilde{\pi}^a_A\}_a$, $\tilde{M}_B=\{\tilde{\pi}^b_B\}_b$ such that, $\tr[W_{AB}(\pi^a_A\otimes \pi^b_B)]=\tr[\rho_{AB}(\tilde{\pi}^a_A\otimes \tilde{\pi}^b_B)]$.
In this classical input-output scenario we are now in a position to prove our first result that in some sense can be considered stronger than the result of Barnum {\it et al.}
\begin{proposition}\label{prop1}
There exist beyond quantum bipartite states yielding correlations that are  classically simulable. 
\end{proposition}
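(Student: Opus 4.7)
The plan is to exhibit an explicit beyond-quantum state whose standard Bell-scenario correlations all admit a local-hidden-variable model, by exploiting the partial transpose. Start from the observation that if $\rho_{AB}\in\mathcal{D}(\mathcal{H}_A\otimes\mathcal{H}_B)$ is any density operator, then its partial transpose $W_{AB}:=\rho_{AB}^{T_B}$ is Hermitian, has unit trace, and satisfies
$$\tr[W_{AB}(\pi_A\otimes\pi_B)]=\tr[\rho_{AB}(\pi_A\otimes\pi_B^T)]\ge 0$$
for every product effect $\pi_A\otimes\pi_B$, since transposition preserves positivity of operators. Hence $W_{AB}\in\mathcal{W}(\mathcal{H}_A\otimes\mathcal{H}_B)$. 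Moreover, if $\rho_{AB}$ is chosen to have a non-positive partial transpose (NPT), then $W_{AB}$ has at least one negative eigenvalue, which places $W_{AB}$ in $\mathcal{W}(\mathcal{H}_A\otimes\mathcal{H}_B)\setminus\mathcal{D}(\mathcal{H}_A\otimes\mathcal{H}_B)$, i.e.\ it is a BQS.

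Next, I would compare the correlation sets of $W_{AB}$ and $\rho_{AB}$ in the classical-input Bell scenario. For any local POVMs $\{\pi^{a}_{A|x}\}_a$ and $\{\pi^{b}_{B|y}\}_b$,
$$p(a,b|x,y)=\tr[W_{AB}(\pi^{a}_{A|x}\otimes\pi^{b}_{B|y})]=\tr[\rho_{AB}(\pi^{a}_{A|x}\otimes(\pi^{b}_{B|y})^{T})].$$
The family $\{(\pi^{b}_{B|y})^{T}\}_b$ is again a valid POVM on $\mathcal{H}_B$ because transposition is linear, preserves positivity, and leaves $\mathbb{I}_B$ invariant; moreover, the map $\pi\mapsto\pi^{T}$ is a bijection on the POVM set. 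Therefore, the set of classical-input-classical-output correlations obtainable from $W_{AB}$ coincides exactly with the set obtainable from $\rho_{AB}$.

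To finish, I would invoke the existence of NPT entangled states that admit a local hidden variable model for the most general POVM measurements, as established (for example) by Barrett for two-qubit Werner states $\rho^{W}_p=p|\Psi^-\rangle\langle\Psi^-|+(1-p)\mathbb{I}/4$ in a suitable range of the mixing parameter where the state is simultaneously entangled (NPT) and fully POVM-local. Picking any such $\rho_{AB}$ and setting $W_{AB}=\rho_{AB}^{T_B}$, the combination of the previous two paragraphs yields a BQS whose Gleason correlations lie entirely in $\mathcal{L}$, proving the proposition.

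The computation is essentially a one-line property of the partial transpose; the only genuinely non-trivial ingredient is the existence of an entangled quantum state that is \emph{fully} local against arbitrary POVMs (not just projective measurements), which is what forces one to cite Barrett-type constructions rather than the original Werner examples. Once that ingredient is in place, the partial-transpose trick converts a quantum example of an entangled-yet-local state into a POPT example of a beyond-quantum-yet-classical state, thereby exhibiting the desired strengthening of the Barnum \emph{et al.} result.
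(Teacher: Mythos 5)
Your proposal is correct, and it arrives at essentially the same example as the paper: the partial transpose of a noisy singlet (Werner) state in the parameter window where the state is NPT--entangled yet POVM--local, i.e.\ $1/3<p\le 5/12$, which is exactly the family $W_p=p\,\Gamma[\ket{\phi^+}\bra{\phi^+}]+(1-p)\mathbb{I}/4$ used in Proposition~1 (up to a local unitary relating $\ket{\phi^+}$ and $\ket{\psi^-}$). Where you differ is in how classical simulability is transferred from $\rho$ to $W=\rho^{T_B}$. The paper notes the sign flip $\tr[\Gamma[\ket{\phi^+}\bra{\phi^+}](P_i\otimes Q_j)]=\tfrac14(1+\hat m_i\cdot\hat n_j)$ versus $\tfrac14(1-\hat m_i\cdot\hat n_j)$ for the singlet and then \emph{explicitly rebuilds} Barrett's LHV response functions for $W_p$, including the extension from rank-one effects to general POVMs and the closure under local operations to cover all $p\le 5/12$. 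You instead observe that $\tr[\rho^{T_B}(\pi_A\otimes\pi_B)]=\tr[\rho(\pi_A\otimes\pi_B^{T})]$ and that $\pi\mapsto\pi^{T}$ is a positivity- and identity-preserving bijection on Bob's POVMs, so the classical-input correlation sets of $\rho$ and $\rho^{T_B}$ are literally identical and locality transfers with no further work. This is cleaner and strictly more general: it shows that \emph{every} NPT state admitting a full POVM local model yields, via partial transposition, a beyond-quantum state with entirely classical correlations, in any dimension, with Barrett's locality result as the only nontrivial input. The paper's explicit construction buys something you forgo, namely a self-contained, quantitative LHV model (and the sharper projective-measurement bound $p\le 1/2$), but as a proof of the existence claim in Proposition~1 your route is complete, provided you verify (as you do) that the chosen $\rho$ is NPT so that $\rho^{T_B}$ genuinely leaves $\mathcal{D}(\mathcal{H}_A\otimes\mathcal{H}_B)$.
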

\begin{proof}
(Sketch) The family of operators $W_p:=p\Gamma[\ket{\phi^+}\bra{\phi^+}]+(1-p)\mathbb{I}/4$ is a BQS for $1/3< p\le1$; $\ket{\phi^+}:=(\ket{00}+\ket{11})/\sqrt{2}$ and $\Gamma$ denotes partial transposition. If we consider projective measurements only then a LHV description is possible whenever $p\le1/2$, whereas for generic POVMs one can have such a description for $p\le5/12$. The LHV models are motivated from the well known constructions of Werner \cite{Werner89} and Barrett \cite{Barrett02}. The explicit construction we defer to the Supplementary part. 
\end{proof}
The result of Barnum {\it et al.} \cite{Barnum10} and our Proposition \ref{prop1} depicts the limitation of classical-input classical-output Bell scenario to reveal the full correlation strength of BQSs. At this point a more general Bell scenario turns out to be advantageous.

{\it Semiquantum Bell scenario.--} 
\begin{figure}[b!]
\centering
\includegraphics[width=0.5\textwidth]{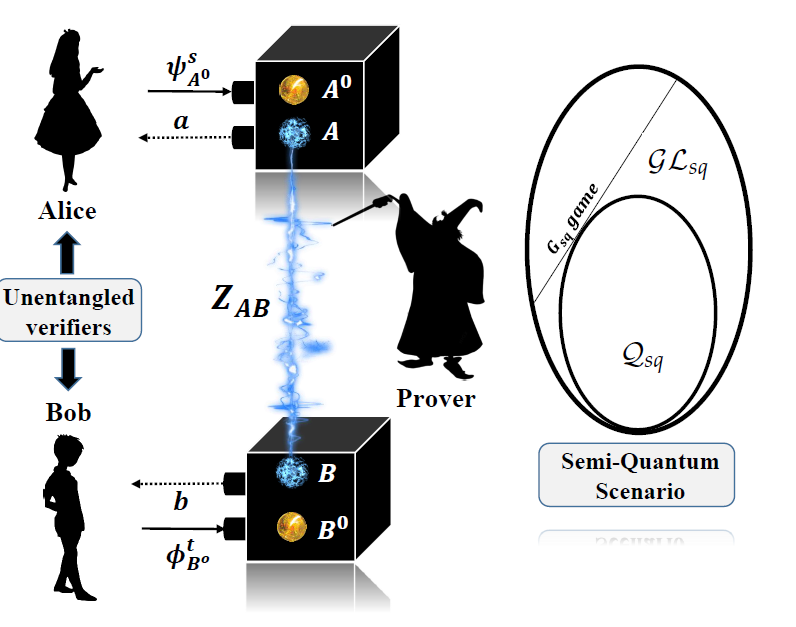}
\caption{(Color online) A powerful but untrustworthy Prover distributes a bipartite state $Z_{AB}$ between two distant Verifiers (Alice and Bob). The Verifiers do not have any entanglement between them, but possess their own trusted local quantum preparation device. Such limited resourceful Verifiers can verify the beyond quantumness of the state $Z_{AB}$ provided to them (Theorem \ref{theo1}). The seminal Hahn-Banach separation theorem plays a crucial role in making this verification possible -- the correlations produced from the bipartite quantum states form a convex-compact proper subset within the set of correlations produced from all bipartite states compatible with local quantum description and NS principle.}\label{fig1}
\end{figure}
The scenario was introduced by Buschemi to establish the nonlocal behaviour of all entangled quantum states \cite{Buscemi12}, which has subsequently generated a plethora of research interests \cite{Branciard13,Banik13,Chaturvedi15,Rosset20,Schmid20,Graffitti20}. In this scenario, each of the Verifiers, assumed to be {\it locally quantum,} has a random source of pure quantum states $\{\{p_i(s_i),\psi^{s_i}_{A_i^o}\}|s_i\in\mathcal{S}_i\}$ (see Fig. \ref{fig1}). They wish to verify whether the state of a global system $W_{A_1 \cdots A_n}$, prepared by a powerful but untrustworthy Prover, is BQS or not.
To this aim they provide their respective quantum states to the different parts of the distributed global state. The Prover returns some classical index $a_i\in\mathcal{O}_i$ by performing local quantum measurements $M_{A_iA_i^o} = \{ \pi^{a_i}_{A_iA_i^o}\}_{a_i}$ on the respective distributed parts of the global state and the states received from the Verifiers. Accordingly, some payoff $\beta:\times_{i=1}^n(\mathcal{S}_i\times\mathcal{O}_i)\mapsto\mathbb{R}$ is given, which specifies a semi-quantum game $\mathbb{G}_{sq}$. From the global state $W_{A_1\cdots A_n}$, the Prover can generate a correlation $P_{W_{A_1\cdots A_n}}:=\{p(a_1,\cdots,a_n|\psi^{s_1},\cdots,\psi^{s_n})\}$ and the expected payoff is calculated as $\mathcal{I}_{\mathbb{G}_{sq}}(W_{A_1\cdots A_n}):=\sum_{s_1,a_1,\cdots, s_n, a_n}\beta\left(s_1,a_1,\cdots,s_n,a_n\right)\times p\left(a_1,\cdots,a_n|\psi^{s_1},\cdots,\psi^{s_n}\right)$. Like the standard scenario, we can define the set of correlations $\mathcal{X}_{sq}$ with $\mathcal{X}\in\{\mathcal{L},\mathcal{Q},\mathcal{GL},\mathcal{NS}\}$ and $\mathcal{X}\subseteq\mathcal{X}_{sq}$ in general. When the quantum sources consist of orthogonal quantum states, the scenario boils down to standard Bell scenario and no distinction is possible between a bipartite entangled state and a BQS \cite{Barnum10}. 

{\it Composing POPT states.--} In the Semiquantum Bell scenario, the Prover performs local measurements $\{\pi^{a_i}_{A_i A_i^o}\}_{a_i}$ on the $i^{th}$ subsystem. The composite multipartite state is given by a functional $\mu_{A_1A_1^o\cdots A_nA_n^o}$ which, invoking unentangled Gleason's theorem, corresponds to a POPT state $Z_{A_1A_1^o\cdots A_nA_n^o}$. The form of $Z_{A_1A_1^o\cdots A_nA_n^o}$ must be consistent with the states held by the Verifiers and the Prover. If the states held by the Verifiers are pure and unentangled, then one can show that $Z_{A_1A_1^o\cdots A_nA_n^o} = \otimes_i\psi^{s_i}_{A_i^o}\otimes W_{A_1\cdots A_n}$. We leave the details to the supplemental part. Interestingly, our next result shows that within local quantum description the unentangled Verifiers (hence weakly resourceful) can test the property `entangled vs BQS' supplied by the more resourceful Prover. 
\begin{theorem}\label{theo1}
For every beyond quantum state $W_{AB}\in\mathcal{W}(\mathcal{H}_A\otimes\mathcal{H}_B)$ there exists a semiquantum game $\mathbb{G}_{sq}$ such that $\mathcal{I}_{\mathbb{G}_{sq}}(W_{AB})<0$, while $\mathcal{I}_{\mathbb{G}_{sq}}(\rho_{AB})\geq0,~\forall~\rho_{AB}\in\mathcal{D}(\mathcal{H}_A\otimes\mathcal{H}_B)$.
\end{theorem}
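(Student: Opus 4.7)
The argument naturally splits in two, as hinted by Fig.~\ref{fig1}: first, a Hahn--Banach separation of $W_{AB}$ from the set of quantum states; second, a realization of the resulting witness as the expected payoff of a semiquantum protocol.

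For the separation step, $\mathcal{D}(\mathcal{H}_A \otimes \mathcal{H}_B)$ is convex and compact in the real vector space $\Herm(\mathcal{H}_A \otimes \mathcal{H}_B)$, and the BQS $W_{AB}$ lies outside it. Hahn--Banach therefore yields a Hermitian operator $E_{AB}$ and a real constant $c$ with $\tr[E_{AB}\rho_{AB}] \geq c$ for every $\rho_{AB} \in \mathcal{D}(\mathcal{H}_A \otimes \mathcal{H}_B)$ while $\tr[E_{AB}W_{AB}] < c$. Because all relevant operators have unit trace, the shift $E_{AB} \mapsto E_{AB} - c\,\mathbb{I}_{AB}$ absorbs the constant to $c=0$, producing a ``beyond-quantum witness''.

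Next I would unfold the semiquantum statistics. For pure unentangled Verifier inputs the composite state reads $\psi^{s}_{A^o} \otimes W_{AB} \otimes \phi^{t}_{B^o}$ (established in the supplement), and tracing out the input systems $A^o, B^o$ in the correlation formula yields
\[
p(a,b\mid s,t) \;=\; \tr\!\bigl[W_{AB}\,(M^{a|s}_A \otimes N^{b|t}_B)\bigr],
\]
with effective POVMs $M^{a|s}_A := \tr_{A^o}[(\mathbb{I}_A \otimes \psi^{s}_{A^o})\,\pi^{a}_{AA^o}]$ and its $B$ analogue. Hence the expected payoff is linear in the state: $\mathcal{I}_{\mathbb{G}_{sq}}(W_{AB}) = \tr[W_{AB}\,Y_{AB}]$ with $Y_{AB} := \sum_{s,a,t,b} p(s)p(t)\beta(s,a,t,b)\,M^{a|s}_A \otimes N^{b|t}_B$, a fixed Hermitian operator on $\mathcal{H}_A \otimes \mathcal{H}_B$ that encodes the whole protocol.

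The remaining task is to engineer the game so that $Y_{AB} = E_{AB}$. Take $\dim A^o = \dim A$, $\dim B^o = \dim B$, let $\{\psi^{s}_{A^o}\}$ and $\{\phi^{t}_{B^o}\}$ be tomographically complete sets of pure states, and let each Prover POVM contain the maximally entangled projector $|\Phi^+\rangle\!\langle\Phi^+|$ as one outcome $a_\star$ (completed by $\mathbb{I}-|\Phi^+\rangle\!\langle\Phi^+|$). A direct calculation gives $M^{a_\star|s}_A = (\psi^{s}_A)^{T}/\dim A$ and similarly for $B$; since transposition is a linear bijection on $\Herm$, the family $\{M^{a_\star|s}_A \otimes N^{b_\star|t}_B\}_{s,t}$ spans $\Herm(\mathcal{H}_A \otimes \mathcal{H}_B)$, so real coefficients $\beta(s,a_\star,t,b_\star)$ (all other $\beta$'s set to zero) can be solved for to achieve $Y_{AB} = E_{AB}$, yielding $\mathcal{I}_{\mathbb{G}_{sq}}(W_{AB}) = \tr[E_{AB}W_{AB}] < 0$ and $\mathcal{I}_{\mathbb{G}_{sq}}(\rho_{AB}) = \tr[E_{AB}\rho_{AB}] \geq 0$ for every quantum $\rho_{AB}$. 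The main obstacle is precisely this realization step rather than the separation: one must verify that the transpose intrinsic to the effective POVM does not spoil the spanning argument, that the ``Bell-plus-complement'' POVM is admissible within the locally quantum framework, and that $\mathcal{I}_{\mathbb{G}_{sq}}(\cdot)$ is correctly read as a state-linear functional once the Prover's measurement is declared part of the protocol rather than optimized adversarially.
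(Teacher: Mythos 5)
Your route is the same as the paper's (Hahn--Banach witness, decomposition of the witness into products of transposed pure states, realization via the maximally entangled projector so that the effective POVM element is $(\psi^{s}_{A})^{\mathrm T}/d_A$), but the write-up has one genuine gap, and it sits exactly at the point you flag at the end and then leave unresolved: the bound $\mathcal{I}_{\mathbb{G}_{sq}}(\rho_{AB})\geq 0$ must hold for \emph{every} local measurement strategy of the untrusted Prover, not only for the honest ``Bell-plus-complement'' strategy that makes $Y_{AB}=E_{AB}$. The game $\mathbb{G}_{sq}$ is specified by the quantum inputs and the payoff coefficients alone; if a Prover holding a quantum state could reach a negative payoff with some \emph{other} choice of $\pi^{a}_{AA^o}\otimes\pi^{b}_{BB^o}$, the certificate would be worthless. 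Your argument, as written, only establishes $\tr[E_{AB}\rho_{AB}]\geq 0$ for the fixed honest measurement.

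The missing ingredient is that the separating functional is not merely Hermitian: since your shifted $E_{AB}$ obeys $\tr[E_{AB}\rho_{AB}]\geq 0$ for all $\rho_{AB}\in\mathcal{D}(\mathcal{H}_A\otimes\mathcal{H}_B)$, testing on pure states gives $E_{AB}\geq 0$, so after normalization $E_{AB}$ is an (entangled, because $\tr[E_{AB}W_{AB}]<0$ with $W_{AB}$ POPT) density operator $\chi_{AB}$. This positivity is what closes the adversarial case: for arbitrary POVM elements one writes
\begin{align*}
\sum_{s,t}\beta_{s,t}\,\tr\!\left[\pi^{1}_{AA^o}\otimes\pi^{1}_{BB^o}\left(\psi^{s}_{A^o}\otimes\rho_{AB}\otimes\psi^{t}_{B^o}\right)\right]
=\tr\!\left[R_{A^oB^o}\,\chi^{\mathrm T}_{A^oB^o}\right],
\end{align*}
where $R_{A^oB^o}:=\tr_{AB}[(\pi^{1}_{AA^o}\otimes\pi^{1}_{BB^o})(\mathbb{I}_{A^oB^o}\otimes\rho_{AB})]\geq 0$ and $\chi^{\mathrm T}_{A^oB^o}=\sum_{s,t}\beta_{s,t}\psi^{s}_{A^o}\otimes\psi^{t}_{B^o}$ is a bona fide density operator, whence the trace is nonnegative. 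Without first recording that $E_{AB}\geq 0$, this step cannot be carried out: a generic Hermitian witness decomposed with real coefficients $\beta_{s,t}$ would give an input-side operator $\sum_{s,t}\beta_{s,t}\psi^{s}_{A^o}\otimes\psi^{t}_{B^o}$ with negative eigenvalues, and an adversarial quantum Prover could then produce a negative payoff. The rest of your construction (the spanning of $\Herm(\mathcal{H}_A\otimes\mathcal{H}_B)$ by products of transposed pure states, the admissibility of the two-outcome Bell measurement, and the tensor-product form $\psi^{s}_{A^o}\otimes W_{AB}\otimes\phi^{t}_{B^o}$ of the composite) is sound and matches the paper.
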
  
\begin{proof}
At the core of our proof lies the classic Hahn-Banach separation theorem of convex analysis and the fact that for every beyond quantum state $W_{AB}\in\mathcal{W}(\mathcal{H}_A\otimes\mathcal{H}_B)$ there exists an entangled state $\chi_{AB}\in\mathcal{D}(\mathcal{H}_A\otimes\mathcal{H}_B)$ such that $\tr[W_{AB}\chi_{AB}]<0$, whereas $\tr[\sigma_{AB}\chi_{AB}]\geq 0$, $\forall~\sigma_{AB}\in\mathcal{D}(\mathcal{H}_A\otimes\mathcal{H}_B)$ \cite{Woronowicz76,Horodecki96,Stormer13}. Also note that, there exits (non-unique) choices of pure states $\psi^s_A \in\mathcal{D}(\mathcal{H}_A)~\&~\psi^t_B\in\mathcal{D}(\mathcal{H}_B)$, and some real coefficients $\{\beta_{s,t}\}$ such that $\chi_{AB}= \sum_{s,t}\beta_{s,t} {\psi^s_A}^\mathrm{T} \otimes {\psi^t_B}^\mathrm{T}$ where $T$ represents the transposition with respect to the computational basis. This leads us to the required game $\mathbb{G}_{sq}^\chi$ where the verifiers Alice and Bob yield quantum inputs $\psi^s_{A^o}$ and $\phi_{B^o}^t$, and ask the Prover to return outputs $\in \{0,1\}$ from the distributed parts of the global state. The average payoff is calculated as $\mathcal{I}:=\sum_{s,t}\beta_{s,t}p(11|\psi^s_{A^o}\psi^t_{B^o})$. The measurement $\{P^{+}_{uu^o},\mathbb{I}_{uu^o}-P^{+}_{uu^o}\}$ is performed on the distributed parts of the global state, where $P_{uu^o}^{+}:=\ket{\phi^+}_{uu^o}\bra{\phi^+}$ with $\ket{\phi^+}_{uu^o}:=\frac{1}{\sqrt{d_u}}\sum^{d_u-1}_{i=0}|ii\rangle$ and $P^{+}_{uu^o}$ corresponds to the outcome  $1$, $u\in\{A,B\}$. We therefore have,
\begin{align*}
\mathcal{I}_{\mathbb{G}^\chi_{sq}}\left(W_{AB}\right)&=\sum_{s,t}\beta_{s,t}\tr\left[P^{+}_{AA^o}\otimes P^{+}_{BB^o}\left(\psi^s_{A^o}\otimes W_{AB} \otimes \psi^t_{B^o}\right)\right] \\
&=\sum_{s,t}\beta_{s,t}\tr\left[\left(R_A\otimes R_B\right)W_{AB}\right];
\end{align*} 
where $R_A$ and $R_B$ are the effective POVMs acting on the parts of Alice's and Bob's shares of the BQS, respectively; and are given by $R_u:=\tr_{u^o}[P^+_{uu^o}(\mathbb{I}_u\otimes \psi^s_{u^o})]=\frac{1}{d_u}{\psi^s_{u}}^\mathrm{T}$. Therefore, we have
\begin{align*}
\mathcal{I}_{\mathbb{G}^\chi_{sq}}\left(W_{AB}\right)&=\frac{1}{d_Bd_A}\sum_{s,t}\beta_{s,t}\tr\left[\left({\psi^s_{A}}^\mathrm{T} \otimes {\psi^t_{B}}^\mathrm{T}\right)W_{AB}\right]\\
&=\frac{1}{d_Bd_A}\tr\left[\left(\sum_{s,t}\beta_{s,t}{\psi^s_{A}}^\mathrm{T} \otimes {\psi^t_{B}}^\mathrm{T}\right)W_{AB}\right]\\
&=\frac{1}{d_Bd_A}\tr\left[\chi_{AB}W_{AB}\right]<0.
\end{align*}
On the other hand, given an arbitrary quantum state $\rho_{AB}$ let the measurements $M_{AA^o}\equiv\{\pi^a_{AA^o}\})_a$ and $N_{BB^o}=\{\pi^b_{BB^o}\}_b$ be performed, where $a,b \in\{0,1\}$. The average payoff turns out to be 
\begin{align*}
\mathcal{I}_{\mathbb{G}^\chi_{sq}}\left(\rho_{AB}\right)&=\sum_{s,t}\beta_{s,t}\tr\left[\pi^1_{AA^o}\otimes \pi^1_{BB^o}\left(\psi^s_{A^o}\otimes \rho_{AB} \otimes \psi^t_{B^o}\right)\right] \\
&=\sum_{s,t}\beta_{s,t}\tr\left[R_{A^oB^o}\left(\psi^s_{A^o}\otimes \psi^t_{B^o}\right)\right],
\end{align*} 
where $R_{A^oB^o}:=\tr_{AB}[(\pi^1_{AA^o}\otimes \pi^1_{BB^o})(\mathbb{I}_{A^oB^o}\otimes \rho_{AB})]$ is a positive operator. Using linearity of trace we get,
\begin{align*}
\mathcal{I}_{\mathbb{G}^\chi_{sq}}\left(\rho_{AB}\right)&=\tr\left[R_{A^oB^o}\left(\sum_{s,t}\beta_{s,t}\psi^s_{A^o}\otimes \psi^t_{B^o}\right)\right]\\
&=\tr\left[R_{A^oB^o}~\chi^\mathrm{T}_{A^oB^o}\right]\geq0.
\end{align*}
The last inequality follows due to the fact that $\chi^\mathrm{T}_{A^oB^o}$ is a valid density operator, and this completes the proof.
\end{proof}
Theorem \ref{theo1} establishes that $\mathcal{Q}_{sq}\subsetneq\mathcal{GL}_{sq}$ in the bipartite scenario. Note that, following an argument similar to \cite{Banik13}, it can be shown that in the semiquantum scenario, even if classical communication between different distributed parts are allowed to the Prover along with the quantum entangled state $\rho_{AB}$, still the local statistics obtained from BQS cannot be simulated.
Our result poses some interesting questions. The proper set inclusion relation $\mathcal{Q}\subsetneq\mathcal{NS}$ established in \cite{Popescu94} has motivated several novel approaches to isolate quantum correlations from beyond-quantum ones \cite{vanDam05,Buhrman10,Pawlowski09,Navascues09,Fritz13,Cabello13,Oppenheim10,Banik13(1),Banik15,Kar16(1),Kar16(2),Banik19,Bhattacharya20}. Along similar lines, the proper set inclusion relation $\mathcal{Q}_{sq}\subsetneq\mathcal{GL}_{sq}$ welcomes new principle(s) to isolate the quantum correlations from beyond-quantum ones in this generalized scenario. Importantly, our Theorem \ref{theo1} suggests that such principles must be sensitive to the quantum signature of local inputs \cite{Rosset20,Schmid20}.

The semi-quantum scenario also has important implications while studying correlations in multipartite (involving more than two parties) scenarios. Acín {\it et al.} have already pointed out that the result of Barnum {\it et al.} does not generalize to the multipartite scenario even in the classical-input classical-output paradigm \cite{Acin10}. They have provided examples of multipartite BQSs producing beyond quantum correlations within the standard Bell scenario. They have also pointed out that a BQS of the form
\begin{align}\label{qpopt}
W_{A_1\cdots A_N}&=\sum_kp_k(\Lambda^k_{A_1}\otimes \cdots \otimes \Lambda^k_{A_N})[\rho^k_{A_1\cdots A_n}],
\end{align}
will not generate any classical-input classical-output correlation that lies outside the set of correlations generated by quantum states. Here, $\{p_k\}$ is a probability distribution, $\rho^k_{A_1\cdots A_n}\in\mathcal{D}(\bigotimes_i\mathcal{H}_{A_i})$, and $\Lambda_i^k$ are positive but not completely positive trace preserving maps on $\mathcal{L}(\mathcal{H}_{A_i})$ \cite{Stormer13}. The authors in \cite{Acin10} have left the question open to identify the additional requirements to close the gap in their result. Our next result provides a solution to close this gap.  
\begin{theorem}\label{theo2}
For every BQS $W_{A_1\cdots A_N}\in\mathcal{W}(\bigotimes_{i=1}^N\mathcal{H}_{A_i})$ there exists a semiquantum game $\mathbb{G}_{sq}$ such that $\mathcal{I}_{\mathbb{G}_{sq}}(W_{A_1\cdots A_N})<0$, whereas $\mathcal{I}_{\mathbb{G}_{sq}}(\rho_{A_1\cdots A_N})\geq0,~\forall\rho_{A_1\cdots A_N}\in\mathcal{D}(\bigotimes_{i=1}^N\mathcal{H}_{A_i})$.
\end{theorem}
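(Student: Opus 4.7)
The plan is to mimic the Hahn-Banach separation strategy of Theorem \ref{theo1} in the $N$-partite setting. First, I would establish the multipartite separation result: for every BQS $W_{A_1\cdots A_N}\in\mathcal{W}(\bigotimes_{i=1}^N\mathcal{H}_{A_i})\setminus\mathcal{D}(\bigotimes_{i=1}^N\mathcal{H}_{A_i})$, there is a density operator $\chi_{A_1\cdots A_N}\in\mathcal{D}(\bigotimes_{i=1}^N\mathcal{H}_{A_i})$ with $\tr[W_{A_1\cdots A_N}\chi_{A_1\cdots A_N}]<0$ and $\tr[\sigma_{A_1\cdots A_N}\chi_{A_1\cdots A_N}]\geq 0$ for every density matrix $\sigma_{A_1\cdots A_N}$. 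This is immediate once one observes that a BQS is, by definition, positive on pure product tensors but not positive as an operator, so it possesses at least one negative eigenvalue; taking $\chi$ to be the rank-one projector onto a corresponding eigenvector (which must be entangled in at least one bipartition, since $W$ is positive on product states) furnishes the separation, the second inequality being trivial as $\chi\geq 0$.

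Next, I would decompose $\chi$ in a product basis of pure states with real coefficients, $\chi_{A_1\cdots A_N}=\sum_{\vec{s}}\beta_{\vec{s}}\,{\psi^{s_1}_{A_1}}^\mathrm{T}\otimes\cdots\otimes{\psi^{s_N}_{A_N}}^\mathrm{T}$ where $\vec{s}=(s_1,\ldots,s_N)$; this is always possible because pure product states span the real vector space of Hermitian operators on $\bigotimes_i\mathcal{H}_{A_i}$. This specifies the semiquantum game $\mathbb{G}^\chi_{sq}$: the $i^{th}$ Verifier samples a label $s_i$ and sends $\psi^{s_i}_{A_i^o}$ to the Prover, who performs on each $A_iA_i^o$ the two-outcome measurement $\{P^+_{A_iA_i^o},\,\mathbb{I}_{A_iA_i^o}-P^+_{A_iA_i^o}\}$ and returns $a_i\in\{0,1\}$, with payoff set to $\beta(\vec{s},\vec{a})=\beta_{\vec{s}}\prod_i\delta_{a_i,1}$ so that $\mathcal{I}=\sum_{\vec{s}}\beta_{\vec{s}}\,p(1,\ldots,1|\psi^{s_1}_{A_1^o},\ldots,\psi^{s_N}_{A_N^o})$.

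The two payoff computations then go through essentially verbatim from the bipartite argument. Using $\tr_{A_i^o}[P^+_{A_iA_i^o}(\mathbb{I}_{A_i}\otimes\psi^{s_i}_{A_i^o})]=\frac{1}{d_{A_i}}{\psi^{s_i}_{A_i}}^\mathrm{T}$ on each tensor factor collapses the all-ones probability to $\frac{1}{\prod_i d_{A_i}}\tr[W_{A_1\cdots A_N}(\bigotimes_i {\psi^{s_i}_{A_i}}^\mathrm{T})]$, and summing against $\beta_{\vec{s}}$ using the expansion of $\chi$ yields $\mathcal{I}_{\mathbb{G}^\chi_{sq}}(W_{A_1\cdots A_N})=\frac{1}{\prod_i d_{A_i}}\tr[\chi\, W]<0$. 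For a quantum $\rho_{A_1\cdots A_N}$ and any local two-outcome Prover POVMs $\{\pi^{a_i}_{A_iA_i^o}\}$, linearity of the trace repackages the payoff as $\mathcal{I}_{\mathbb{G}^\chi_{sq}}(\rho)=\tr[R_{A_1^o\cdots A_N^o}\,\chi^\mathrm{T}_{A_1^o\cdots A_N^o}]$ with $R:=\tr_{A_1\cdots A_N}[(\bigotimes_i\pi^1_{A_iA_i^o})(\mathbb{I}_{A_1^o\cdots A_N^o}\otimes\rho)]\geq 0$; since $\chi^\mathrm{T}$ is again a density matrix (full transposition preserves positivity), the expectation is $\geq 0$.

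The main obstacle is almost entirely book-keeping, as the essential ingredients---Hahn-Banach separation between $\mathcal{D}$ and any outside point, the composition $Z_{A_1A_1^o\cdots A_NA_N^o}=\bigotimes_i\psi^{s_i}_{A_i^o}\otimes W_{A_1\cdots A_N}$ for pure unentangled Verifier inputs (already stated in the preceding section for arbitrary $N$), and the positivity of the full transpose of a density matrix---extend without any change from two parties to $N$. The only subtle point worth flagging is that the eigenvector witnessing the negativity of $W$ need not exhibit genuine $N$-partite entanglement, merely entanglement across some bipartition; this however already suffices to produce a separating $\chi$, so the argument is insensitive to the fine structure of multipartite entanglement in $W$.
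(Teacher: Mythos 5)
Your proposal is correct and follows essentially the same route as the paper's proof: the separating witness $\chi$ is taken as the (necessarily entangled) eigen-projector of a negative eigenvalue of $W$, decomposed into real combinations of transposed pure product states, and the same $P^{+}$-measurement strategy together with the positivity of $R_{A_1^o\cdots A_N^o}$ and of $\chi^{\mathrm{T}}$ yields the two payoff bounds. The paper states the existence of $\chi$ by citation in the proof of Theorem 2 and gives your spectral argument explicitly in its construction section, so there is no substantive difference.
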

The proof is a straightforward generalization of the proof of Theorem \ref{theo1} (see supplemental). While Theorem \ref{theo1} \& \ref{theo2} are just existence theorems, it is not hard to see that given an arbitrary BQS there is an efficient algorithm to construct a semiquantum game (the procedure is discussed in supplemental). It is important to note that non-orthogonal quantum inputs are necessary to reveal the beyond quantum signature of correlation for any BQS of the form of Eq.(\ref{qpopt}). This implicitly follows from the results of Barnum {\it et al.} \cite{Barnum10} and Acín {\it et el.} \cite{Acin10}. It is worth mentioning that this semi-quantum scenario is different from local tomography as it establishes beyond quantum nature of POPT states in a measurement device independent manner where the measurement devices used to produce the classical outcomes need not be trusted \cite{Branciard13}.

{\it Discussion:} One of the earnest research endeavours in quantum theory is to understand the limited nonlocal behaviour of quantum correlations. Apart from the foundational appeal, this question  also has practical relevance as nonlocal correlations have been established as useful resources for several tasks. In the bipartite scenario the result of Barnum {\it et al.} \cite{Barnum10} provides an answer to this question by assuming the description of  local systems to be quantum. Our work, however, points out the limitation of the scenario considered in \cite{Barnum10}. The authors there have not considered the most general bipartite scenario allowed within the unentangled Gleason-Busch framework, which assumes local quantum measurement and the no-signalling principle. Within this framework, the  type of inputs allowed are not restricted to classical indices, rather they can be quantum states. Our Theorem \ref{theo1} shows that all bipartite beyond quantum states compatible with unentangled Gleason-Busch theorem can yield beyond quantum correlations in the quantum input scenario, and accordingly divulges a more complex picture within the correlations zoo. Our study therefore welcomes new principle(s) to isolate the correlations obtained from quantum states, and more importantly, suggests that such a principle should take the type of inputs into consideration as indistinguishability of nonorthogonal quantum input states plays a crucial role in making the distinction between Quantum and BQS states. 

Our Theorem \ref{theo2} establishes that within the quantum input paradigm all multipartite BQSs yield beyond quantum correlations which was known earlier only for a particular class of such states \cite{Acin10}. After the work of \cite{Acin10}, Torre {\it et al.} have shown that when the local systems are identical qubits, any theory admitting at least one continuous reversible interaction must be identical to quantum theory \cite{Torre12}. However, the result in \cite{Torre12} has also been obtained within the classical-input classical-output paradigm. It might be interesting to see what additional structures are required there to single out the quantum correlations in the quantum-input scenario. On the other hand, within the classical-input-classical-output paradigm, the authors in \cite{DallArno17} and the present authors with other collaborators in \cite{Naik21,Sen2022} have studied beyond quantum correlations in the time-like domain. Similar studies with quantum inputs might provide new insights there. 

\begin{acknowledgements}
We would like to thank Tamal Guha, Guruprasad Kar, Francesco Buscemi, and Markus P. Müller for their comments on the preliminary version of our manuscript. MA and MB acknowledge funding from the National Mission in Interdisciplinary Cyber-Physical systems from the Department of Science and Technology through the I-HUB Quantum Technology Foundation (Grant no: I-HUB/PDF/2021-22/008). MB acknowledges support through the research grant of INSPIRE Faculty fellowship from the Department of Science and Technology, Government of India and the start-up research grant from SERB, Department of Science and Technology (Grant no: SRG/2021/000267).
\end{acknowledgements}

\onecolumngrid
\section*{Supplemental}
\section{Proof of Proposition ${\bf 1}$}
\begin{proof}
The POPTness of the state directly follows from the expression $\tr[W_p(\mathbb{P}_{\hat{n}}\otimes\mathbb{P}_{\hat{m}})]=1/4(1+p~\hat{n}.\hat{m})$, where $\mathbb{P}_{\hat{x}}:=1/2(\mathbb{I}+\hat{x}.\sigma)$; and beyond quantumness follows from the explicit eigenvalue calculation of the operator $W_p$.

Our aim is to show that for certain range of the parameter $p$, the classical-input classical-output correlations obtained from the class of BQSs  $W_p:=p\Gamma[\ket{\phi^+}\bra{\phi^+}]+(1-p)\mathbb{I}/4$ can be classically simulated. Given the classical inputs, the parties Alice and Bob perform some local measurement on their part of the BQS to obtained some classical outputs. The joint input-output probabilities are calculated using Born rule as the local systems are assumed to be quantum. By classically simulable we mean that the obtained correlations allow a local hidden variable (LHV) model, {\it i.e.} if Alice and Bob perform some measurements $A\equiv\{A_i~|~A_i\ge0~\&~\sum_iA_i=\mathbb{I}\}$ and $B\equiv\{B_j~|~B_j\ge0~\&~\sum_jB_j=\mathbb{I}\}$ respectively, then the joint probability distributions are factorizable.
\begin{align}\label{P1}
P(A_i,B_j|A,B,W_p)=\int_{\Lambda}\omega(\lambda|W_p)P(A_i|A,\lambda)P(B_j|B,\lambda)d\lambda,  
\end{align}
where $\lambda\in\Lambda$ is some shared variable (also called common cause/HV) and $\omega(\lambda|W_p)$ is a probability distribution on the HV space $\Lambda$. 

Let us first consider the particular case, where measurement effects of Alice's and Bob's measurements are proportional to some rank one projection operator, {\it i.e.} $A_i=x_iP_i~\&~B_j=y_jQ_j$, with $0< x_i,y_j\leq1$. Note that, $\tr\left[\Gamma[\ket{\phi^+}\bra{\phi^+}]\left(P_i\otimes Q_j\right)\right]=1/4(1+\hat{m}_i.\hat{n}_j)$, where $P_i:=1/2(\mathbb{I}+\hat{m}_i.\sigma)$ and $Q_j:=1/2(\mathbb{I}+\hat{n}_j.\sigma)$. This expression differs from $\tr\left[\ket{\psi^-}\bra{\psi^-}\left(P_i\otimes Q_j\right)\right] =1/4(1-\hat{m}_i.\hat{n}_j)$ just by a negative sign, which motivates us to construct the LHV for $W_p$ by simply modifying the LHV model known for the noisy singlet state \cite{Barrett02}.

Let the hidden variables $\lambda\in\Lambda$ (which we shall now denote by $\ket{\lambda}$) be the unit vectors of a $2$-dimensional complex Hilbert space. The local responses are given by,
\begin{align}\label{P2}
 P(A_i|A,\lambda)&:=\bra{\lambda}A_i\ket{\lambda}\Theta\left(\bra{\lambda}P_i\ket{\lambda}-\frac{1}{2}\right)+\frac{x_i}{2}\left(1-\sum_k\bra{\lambda}A_k\ket{\lambda}\Theta\left(\bra{\lambda}P_i\ket{\lambda}-\frac{1}{2}\right)\right);\\ \label{P3}
 P(B_j|B,\lambda)&:= y_j\left(1-\bra{\lambda^{\perp}}Q_j\ket{\lambda^{\perp}}\right);
\end{align}
where $\Theta(x)$ is the Heaviside step function and $\ket{\lambda^\perp}$ is the state perpendicular to $\ket{\lambda}$. It is noteworthy that the response on Alice's side is contextual, since the response of the effect $A_i$ depends on the other effects $A_k$'s constituting the measurement $A$.  Substituting Eqs.(\ref{P2}) and (\ref{P3}) in Eq.(\ref{P1}) we obtain
\begin{align}
P\left(A_i,B_j|A,B,W_p\right)&=\int_{\Lambda}{\omega}\left(\lambda|W_p\right)\left[y_j\left(1-\bra{\lambda^{\perp}}Q_j\ket{\lambda^{\perp}}\right)\right]\times\left[\bra{\lambda}A_i\ket{\lambda}\Theta\left(\bra{\lambda}P_i\ket{\lambda}-\frac{1}{2}\right)\right.\nonumber\\
&~~~~~~~~~~~~~~~~~~~~~~~~~~~~~\left.+\frac{x_i}{2}\left(1-\sum_k\bra{\lambda}A_k\ket{\lambda}\Theta\left(\bra{\lambda}P_i\ket{\lambda}-\frac{1}{2}\right)\right)\right] d\lambda.\label{P4}
\end{align}

To check the reproducibility condition let us define the following quantity:
\begin{align}
J_{ij}:=x_iy_j \int d\lambda ~ \omega(\lambda|W_p)\Theta\left(\bra{\lambda}P_i\ket{\lambda}-\frac{1}{2}\right)\bra{\lambda}P_i\ket{\lambda}\bra{\lambda^{\perp}}Q_j\ket{\lambda^{\perp}}. \label{P5}    
\end{align}
 Using Eq.(\ref{P5}), Eq.(\ref{P4}) can be written as 
\begin{align}
P\left(A_i,B_j|A,B,W_p\right) &=\left(-J_{ij}-\frac{1}{2}x_iy_j\int d\lambda ~\omega\left(\lambda|W_p\right)\bra{\lambda^{\perp}}Q_j\ket{\lambda^{\perp}}\right)+\left(\frac{x_i}{2}\sum_kJ_{kj}\right)\nonumber\\
&~~~~~~~~~~~~~~~~+\left(y_j\sum_lJ_{il}\right)+\left(\frac{x_iy_j}{2}\int d\lambda ~ \omega\left(\lambda|W_p\right)\right)-\left(\frac{x_iy_j}{2}\sum_{kl}J_{kl}\right)\nonumber\\
&=\frac{x_iy_j}{2}\left(\int d\lambda ~\omega\left(\lambda|W_p\right)-\int d\lambda ~\omega\left(\lambda|W_p\right)\bra{\lambda^{\perp}}Q_j\ket{\lambda^{\perp}}\right)\nonumber\\
&~~~~~~~~~~~~~~+\left(-J_{ij}+\frac{x_i}{2}\sum_kJ_{kj}\right)+\left(y_j\sum_lJ_{il}-\frac{x_iy_j}{2}\sum_{kl}J_{kl}\right).\label{P6}
\end{align}
The quantity $c\equiv\int d\lambda ~\omega\left(\lambda|{W_p}\right)\bra{\lambda^{\perp}}Q_j\ket{\lambda^{\perp}}$ is invariant under the change of $Q_j$. Now, $\sum_jy_jc=\sum_jy_j\int d\lambda~\omega\left(\lambda|{W_p}\right)\bra{\lambda^{\perp}}Q_j\ket{\lambda^{\perp}}=\int d\lambda~\omega\left(\lambda|{W_p}\right)=1$. Thus $c=1/2$; which yields,
\begin{align}
P\left(A_i,B_j|A,B,W_p\right)=\frac{x_iy_j}{4}+\left(-J_{ij}+\frac{x_i}{2}\sum_kJ_{kj}\right)+\left(y_j\sum_lJ_{il}-\frac{x_iy_j}{2}\sum_{kl}J_{kl}\right).\label{P7}
\end{align}
In order to evaluate $J_{ij}$ we write $\ket{\lambda}=z_0\ket{0}+z_1\ket{1}$ where $\{\ket{0},\ket{1}\}$ is an orthonormal basis for $\mathbb{C}^2$. Let $z_\nu=r_{\nu}e^{i\theta_{\nu}}$ for $\nu\in\{0,1\}$. We choose $\ket{0}$ to be such that $\ket{0}\bra{0}=P_i$. Writing $u_{\nu}=r_{\nu}^2$ and $Q_j=\ket{q_j}\bra{q_j}$ and using the fact $\braket{\lambda^{\perp}|q_j}\braket{q_j|\lambda^{\perp}}=\braket{\lambda|q_j^{\perp}}\braket{q_j^{\perp}|\lambda}$, we get
\begin{align}
J_{ij}&=x_iy_j\int d\lambda ~ \omega\left(\lambda|W_p\right)\Theta(\bra{\lambda}P_i\ket{\lambda}-1/2)\bra{\lambda}P_i\ket{\lambda}\braket{\lambda|q_j^{\perp}}\braket{q_j^{\perp}|\lambda}\nonumber\\
&=\frac{1}{N}x_iy_j\sum_{\nu=0}^1|\braket{q_j^{\perp}|\nu}|^2\int_{1/2}^1du_0\int_0^1du_1~\delta(u_0+u_1-1)~u_0u_{\nu}=x_iy_j\sum_{\nu=0}^1~|\braket{q_j^{\perp}|\nu}|^2~J_{\nu},\label{P8}
\end{align}
where we have assumed $\omega\left(\lambda|W_p\right)$ to be a uniform distribution over $\Lambda$ and
\begin{align}
N&:=\int_{0}^1du_0\int_0^1du_1~\delta(u_0+u_1-1),\\
J_{\nu}&:=\frac{1}{N}\int_{1/2}^1du_0\int_0^1du_1~\delta(u_0+u_1-1)~u_0u_{\nu}.
\end{align}
Defining $\tilde{J}\equiv\frac{1}{N}\int_{1/2}^1du_0\int_0^1du_1~\delta(u_0+u_1-1)~u_0~$ and using $u_0+u_1=1$, we can write $J_1=\tilde{J}-J_0$. Using normalization condition for $\ket{q_j^{\perp}}$ and the fact $\ket{p_i}=\ket{0}$ we get $|\braket{q_j^{\perp}|1}|^2=1-|\braket{q_j^{\perp}|p_i}|^2$,  which thus yields
\begin{align}
J_{ij}=x_iy_j\left[\left(\Tilde{J}-J_0\right)+\left(2J_0-\Tilde{J}\right) |\braket{q_j^{\perp}|p_i}|^2\right].\label{P9}
\end{align}
Substituting Eq.(\ref{P9}) in Eq.(\ref{P7}) we get,
\begin{equation}
 P\left(A_i,B_j|A,B,W_p\right)=x_iy_j\left(\frac{1+4J_0-2\Tilde{J}}{4}-(2J_0-\Tilde{J})|\braket{q_j^{\perp}|p_i}|^2\right)
     \label{P10}
\end{equation}
Straightforward integration yields $J_0=\frac{7}{24}$ and $\Tilde{J}=\frac{3}{8}$, which further implies
\begin{align}
P\left(A_i,B_j|A,B,W_p\right)&=\frac{x_iy_j}{48}\left(17-10|\braket{q_j^{\perp}|p_i}|^2\right)=\frac{x_iy_j}{48}\left(17-10\tr\left(P_iQ_j^{\perp}\right)\right)\nonumber\\
&=x_iy_j\frac{1}{4}\left(1+\frac{5}{12}\hat{m}_i\cdot\hat{n}_j\right).\label{P11}
\end{align}
Again, from the Born rule we have
\begin{equation}
 P\left(A_i,B_j|A,B,W_p\right)=\tr\left[(A_i\otimes  B_j)(W_p)\right]=x_iy_j\frac{1}{4}\left(1+p~\hat{m}_i\cdot\hat{n}_j\right).\label{P12}
\end{equation}
Therefore, for $p=\frac{5}{12}$ the BQS $W_p$ allows a LHV model when all the effects constituting Alice's and Bob's measurements are proportional to rank one projectors. We are now left to extend this model for more general measurements (consisting more than rank one effects). This can be argued by noticing that any POVM element $\mathcal{A}$ is a Hermitian operator with $0< \mathcal{A}\leq1$, and hence allows spectral decomposition of the form $\mathcal{A}=\sum_i A_i$, where $A_i = x_i P_i$ are operators proportional to rank-one projectors like the ones considered above with  $0< x_i\leq1$ and $P_{i}P_{j}=\delta_{ij}P_{i}$. Thus any general POV measurement can be regarded as a coarse-grained measurement of the special scenario considered above. We associate the outcome $A_i$ of the finer measurement with the outcome $\mathcal{A}$ of the coarse-grained measurement for all values of $i$. Thus we have a LHV model for $W_{{5}/{12}}$.

Once the LHV model is defined for a particular state, it can be extended for a large class of states. Suppose we have a LHV model for the state $\sigma_1$. It is then possible to construct a LHV model for a state $\sigma_2$ if it can be written in the form $  \sigma_2=\sum_{ij}M_i\otimes N_j~\sigma_1~M_i^{\dagger}\otimes N_j^{\dagger}$, such that $\sum_iM_i^{\dagger}M_i=\mathbb{I},~\&~\sum_jN_j^{\dagger}N_j=\mathbb{I}$. For describing the LHV model of $\sigma_2$ we just need to modify the responses in the following way
\begin{align}
P_{\sigma_2}(A_i|A,\lambda)&:= P_{\sigma_1}(A_i^{\prime}|A^{\prime},\lambda)~~\mbox{where}~~A_i^{\prime}:=\sum_kM_k^{\dagger}A_iM_k,\\
P_{\sigma_2}(B_j|B,\lambda)&:= P_{\sigma_1}(B_j^{\prime}|B^{\prime},\lambda)~~~\mbox{where}~~B_j^{\prime}:=\sum_lN_l^{\dagger}B_jN_l.
\end{align}
 If we now take $\omega\left(\lambda|\sigma_2\right)=\omega\left(\lambda|\sigma_1\right)$, the above construction will give
\begin{align}
\int d\lambda ~\omega\left(\lambda|\sigma_2\right)P_{\sigma_2}(A_i|A,\lambda)P_{\sigma_2}(B_j|B,\lambda)
&=\tr[\left(A_i^{\prime}\otimes B_j^{\prime}\right)\sigma_1]\nonumber\\
&=\sum_{kl}\tr[\left(M_k^{\dagger}A_iM_k\otimes N_l^{\dagger}B_jN_l\right)\sigma_1]\nonumber\\
&=\sum_{kl}\tr[\left(A_i\otimes B_j\right)\left(M_k\otimes N_l~\sigma_1~M_k^{\dagger}\otimes N_l^{\dagger}\right)]\nonumber\\
&=\tr[\left(A_i\otimes B_j\right)\sigma_2].
\end{align}
Thus the above construction is successful in defining a valid LHV model for $\sigma_2$. It is obvious that any $W_{p^\prime}$ can be created from $W_p$ just by using local operations if $p^\prime\leq p$. This implies existence of a LHV model for $W_{p\leq\frac{5}{12}}$. Therefore, any classical-input classical-output correlation obtained from $W_p$ is classically simulable whenever $p\le5/12$. On the other hand, as discussed in the manuscript, $W_p$ is BQS for $p>1/3$. This proves the claim of Proposition ${\bf 1}$. 
\end{proof}

{\bf Remark:} Motivated from the LHV model constructed in \cite{Werner89,Popescu94(1)}, it can be further shown that for $W_p$ a classical model exists for $p\le1/2$ whenever Alice's and Bob measurements are limited to projective measurement only. In this case also $\lambda$'s are given by unit vectors of $2$-dimensional complex Hilbert space and $\omega\left(\lambda|W_p\right)$ is taken to be uniform distribution. Using spherical polar coordinates we can denote the HVs as $\hat{\lambda}=\sin(\theta)\cos(\phi)\hat{i}+\sin(\theta)\sin(\phi)\hat{j}+\cos(\theta)\hat{k}$ and $\omega\left(\lambda|W_p\right)d\lambda=\frac{1}{4\pi}\sin(\theta)d\theta d\phi$. Alice's and Bob's response are given by
\begin{align}
 P(P_i|A,\lambda)&=\cos^2(\alpha_1/2)=\frac{1+\cos(\alpha_1)}{2};\\
 P(Q_j|B,\lambda)&=1~~~~\mbox{if}~~~2\cos^2(\alpha_2/2) < 1;\\
    &=0~~~~\mbox{if}~~~2\cos^2(\alpha_2/2) > 1.
\end{align}
Here $\alpha_1$ is the angle between the block vector of $P_i$ and $-\hat{\lambda}$, and $\alpha_2$ is the angle between the block vector of $Q_j$ and $\hat{\lambda}$. Without any loss of generality we can consider $P_i=1/2(\mathbb{I}+\hat{m}_i.\sigma)$ and $Q_j=1/2(\mathbb{I}+\hat{n}_j.\sigma)$ with $\hat{m}_i=(\sin~x,0,\cos~x)$ and $\hat{n}_j=(0,0,1)$. This implies $P(Q_j|B,\lambda)=1$ for $\frac{\pi}{2}<\alpha_2<\frac{3\pi}{2}$ and accordingly we have non zero contribution in the integral for $\frac{\pi}{2}<\theta<\pi$. Also we get $\cos(\alpha_1)=-\sin(x)\sin(\theta)\cos(\phi)-\cos(x)\cos(\theta)$. Therefore, we have
\begin{align}
P\left(P_i,Q_j|A,B,W_p\right)&=\int_{\theta=\frac{\pi}{2}}^{\theta=\pi}\int_{\phi=0}^{\phi=2\pi}\frac{1}{4\pi}\sin(\theta)\frac{1}{2}\left[1-\sin(x)\sin(\theta)\cos(\phi)-\cos(x)\cos(\theta)\right]~d\theta~d\phi\nonumber\\
&=\frac{1}{4}+\frac{\cos~x}{8}=\frac{1}{4}(1+\frac{1}{2} \hat{m}_i\cdot\hat{n}_j),
\end{align}
which is same as the Born probability obtained from the state $W_p$ for $p=1/2$. It is not difficult to see that the model can be extended for any values of $p\le1/2$.

\section{Composing POPT sates} \label{foundational implication}
%Structure of local quantum theory under tensor product
In the Semiquantum Bell scenario, the Prover performs local measurements $M_{A_iA_i^o} = \{\pi^{a_i}_{A_i A_i^o}\}_{a_i}$ on the $i^{th}$ subsystem. In the locally quantum no-signalling framework, the composite multipartite state is given by a functional $\mu_{A_1A_1^o\cdots A_nA_n^o}$ which, invoking unentangled Gleason's theorem, corresponds to a POPT state $Z_{A_1A_1^o\cdots A_nA_n^o}$. If the states held by the Verifiers are pure and unentangled, then one can show that $Z_{A_1A_1^o\cdots A_nA_n^o} = \otimes_i\psi^{s_i}_{A_i^o}\otimes W_{A_1\cdots A_n}$. To begin with, it is easy to check that $\otimes_i\psi^{s_i}_{A_i^o}\otimes W_{A_1\cdots A_n}$ is a valid POPT state, as shown below: 

\begin{proposition} \label{propA1}
For every POPT state $W_{A_1 \cdots A_n}$, the tensor product state $W_{A_1\cdots A_n}\otimes\psi^{s_1}_{A_1^o}\otimes\cdots\otimes\psi^{s_n}_{A_n^o}$ is also a POPT state.
\end{proposition}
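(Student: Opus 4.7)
The operator $Z := W_{A_1\cdots A_n}\otimes\bigotimes_i\psi^{s_i}_{A_i^o}$ is manifestly Hermitian (as a tensor product of Hermitian operators) and has unit trace since $\tr(W_{A_1\cdots A_n})=1$ and each $\tr(\psi^{s_i}_{A_i^o})=1$. The only nontrivial point is the positivity condition defining POPTness, which here is a condition on the $n$-fold tensor structure where the $i$-th party holds the combined system $\mathcal{H}_{A_i}\otimes\mathcal{H}_{A_i^o}$. My plan is to reduce this $2n$-factor positivity condition to the $n$-factor POPT condition already satisfied by $W_{A_1\cdots A_n}$ via partial tracing over the Verifier subsystems.

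Concretely, given any product of local effects $\bigotimes_i \pi^{a_i}_{A_iA_i^o}$ with each $\pi^{a_i}_{A_iA_i^o}\in\mathcal{E}(\mathcal{H}_{A_i}\otimes\mathcal{H}_{A_i^o})$ (and allowed to be entangled between $A_i$ and $A_i^o$), I would expand
\[
\tr\!\left[Z\,\bigotimes_i \pi^{a_i}_{A_iA_i^o}\right]=\tr\!\left[W_{A_1\cdots A_n}\,\bigotimes_i R^{a_i}_{A_i}\right],\qquad R^{a_i}_{A_i}:=\tr_{A_i^o}\!\left[\bigl(\mathbb{I}_{A_i}\otimes\psi^{s_i}_{A_i^o}\bigr)\pi^{a_i}_{A_iA_i^o}\right],
\]
using the multiplicativity of trace across tensor factors and pushing each $\psi^{s_i}_{A_i^o}$ into its corresponding effect. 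The next step is to observe that each $R^{a_i}_{A_i}$ is a positive operator on $\mathcal{H}_{A_i}$: it is the partial trace of a product of two positive operators $(\mathbb{I}_{A_i}\otimes\psi^{s_i}_{A_i^o})$ and $\pi^{a_i}_{A_iA_i^o}$, which itself is positive (e.g.\ by writing $\psi^{s_i}_{A_i^o}=|\psi^{s_i}\rangle\langle\psi^{s_i}|$ and noting $(\mathbb{I}\otimes\langle\psi^{s_i}|)\pi^{a_i}_{A_iA_i^o}(\mathbb{I}\otimes|\psi^{s_i}\rangle)\ge 0$). Since each $R^{a_i}_{A_i}\ge 0$ is a valid local effect, the POPT property of $W_{A_1\cdots A_n}$ yields $\tr[W_{A_1\cdots A_n}\bigotimes_i R^{a_i}_{A_i}]\ge 0$, completing the argument.

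There is no real obstacle here; the only subtlety worth flagging is to resist the temptation to assume that $\pi^{a_i}_{A_iA_i^o}$ factorises across $A_i$ and $A_i^o$. It does not, in general, but the partial trace against the fixed pure state $\psi^{s_i}_{A_i^o}$ produces a \emph{single} positive operator on $\mathcal{H}_{A_i}$, which is exactly what the POPT definition for $W$ requires. This reduction also explains why the construction is natural: the ancillary pure states $\psi^{s_i}_{A_i^o}$ act as fixed inputs that merely reshape local effects on $A_iA_i^o$ into legitimate local effects on $A_i$, so the positivity guaranteed by POPTness of $W$ automatically carries over to the extended system.
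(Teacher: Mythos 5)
Your proposal is correct and follows essentially the same route as the paper's proof: reduce the positivity test for $Z$ under arbitrary (possibly entangled across $A_i$ and $A_i^o$) local effects to the POPT condition on $W_{A_1\cdots A_n}$ by partial tracing each effect against the fixed pure state $\psi^{s_i}_{A_i^o}$ to obtain a positive effective effect on $\mathcal{H}_{A_i}$. Your explicit justification that $R^{a_i}_{A_i}=(\mathbb{I}\otimes\langle\psi^{s_i}|)\,\pi^{a_i}_{A_iA_i^o}\,(\mathbb{I}\otimes|\psi^{s_i}\rangle)\geq 0$ is a small welcome addition the paper only asserts.
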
 
\begin{proof}
For any set of local POVMs $ \pi_{A_1 A_1^o}\otimes \cdots \otimes \pi_{A_n A_n^o} $ acting on the tensor product state $W_{A_1\cdots A_n} \otimes \psi^{s_1}_{A_1^o} \otimes \cdots \otimes \psi^{s_n}_{A_n^o}$, we have,

\begin{align}
    \tr\left[\left( \pi_{A_1 A_1^o} \otimes  \cdots \otimes \pi_{A_n A_n^o} \right)\left( W_{A_1\cdots A_n} \otimes \psi^{s_1}_{A_1^o} \otimes \cdots \otimes \psi^{s_n}_{A_n^o} \right)\right] &= \tr\left[\left(\pi_{A_1} \otimes \cdots \otimes \pi_{A_n} \right)\left( W_{A_1\cdots A_n} \right)\right] \nonumber \\
    &\ge 0 \nonumber\\
    \text{where, } \pi_{A_j} := \tr_{A_j^o}\left[\left( \mathbb{I}_{A_j}\otimes \psi^{s_j}_{A_j^o} \right) \left(\pi_{A_j A_j^o}  \right)\right] \text{  is a positive}&\text{ operator for every POVM element  } \pi_{A_j A_j^o.} \nonumber
\end{align}
The final inequality follows from the fact that $W_{A_1\cdots A_n}$ is a POPT state and gives positive probabilities for local measurements. Therefore, $ W_{A_1\cdots A_n} \otimes \psi^{s_1}_{A_1^o} \otimes \cdots \otimes \psi^{s_n}_{A_n^o}$ produces a positive probability for all local measurements. Hence, it is a valid POPT state. 
\end{proof}
We next show that subsystems of POPT states are obtained by the partial trace operation. In what follows, we restrict the analysis to bipartite scenarios for notational convenience. The generalization to multipartite scenarios is obvious. For any local measurements $M_{AA^o}=\{\pi^{a}_{A A^o}\}_{a}$ and $M_{BB^o}=\{\pi^{b}_{B B^o}\}_{b}$ we have,
\begin{align*}
    \mu_{AA^oBB^o}(\pi^a_{AA^o}, \pi^b_{BB^o}) &= \tr[Z_{AA^oBB^o}(\pi^a_{AA^o}\otimes \pi^b_{BB^o})]
\end{align*}
Let $\pi^{a}_{A A^o} = \pi^{\alpha}_{A} \otimes \pi^{\alpha^o}_{A^o}$ and $\pi^{b}_{B B^o} = \pi^{\beta}_B \otimes \pi^{\beta^o}_{B^o}$ with $\sum_{\alpha} \pi^{\alpha}_{A} = \mathbb{1}_{A}, \sum_{\alpha^o} \pi^{\alpha^o}_{A^o} = \mathbb{1}_{A^o}, \sum_{\beta} \pi^{\beta}_B =  \mathbb{1}_B, \sum_{\beta^o} \pi^{\beta^o}_{B^o} =  \mathbb{1}_{B^o}.$
\begin{align}
    \mu_{AA^oBB^o}(\pi^{\alpha}_{A} \otimes \pi^{\alpha^o}_{A^o}, \pi^{\beta}_B \otimes \pi^{\beta^o}_{B^o}) &= \tr[Z_{AA^oBB^o}(\pi^{\alpha}_{A} \otimes \pi^{\alpha^o}_{A^o}\otimes \pi^{\beta}_B \otimes \pi^{\beta^o}_{B^o})] \label{eq22} \\
    \text{Summing over }& \alpha^o \text{ and } \beta^o, \nonumber \\
    \mu_{AB}(\pi^{\alpha}_{A}, \pi^{\beta}_B) = \tr[W_{AB}(\pi^{\alpha}_{A}\otimes \pi^{\beta}_B)] &=  \tr[Z_{AA^oBB^o}(\pi^{\alpha}_{A} \otimes \mathbb{1}_{A^o}\otimes \pi^{\beta}_B \otimes \mathbb{1}_{B^o})] \label{eq23}
\end{align}
Where, we have used $\sum_{\alpha^o,\beta^o}\mu_{AA^oBB^o}(\pi^{\alpha}_{A} \otimes \pi^{\alpha^o}_{A^o}, \pi^{\beta}_B \otimes \pi^{\beta^o}_{B^o}) = \sum_{\alpha^o,\beta^o} p(\alpha,\alpha^o,\beta,\beta^o|M_{AA^o} M_{BB^o}) = p(\alpha,\beta | M_A M_B) = \mu_{AB}(\pi^{\alpha}_{A}, \pi^{\beta}_B)$. Since equation Eq. (\ref{eq23}) is true for all POVMs $\pi^\alpha_A$ and $\pi^\beta_B$, we get,
\begin{align}
  \tr_{A^oB^o}(Z_{AA^oBB^o}) = W_{AB} \label{eq24}
\end{align}
If we sum over $\beta$ and $\beta^o$ in Eq. (\ref{eq22}), we get
\begin{align}
    \mu_{AA^o}(\pi^{\alpha}_{A}, \pi^{\alpha^o}_{A^o}) = \tr[\tr_{BB^o}(Z_{AA^oBB^o})(\pi^{\alpha}_{A}\otimes \pi^{\alpha^o}_{A^o})] \label{eq25}
\end{align}
$\tr_{BB^o}(Z_{AA^oBB^o})$ is a Positive Operator since $Z_{AA^oBB^o}$ must be a POPT in the $AA^o/BB^o$ cut. If we now sum Eq. (\ref{eq25}) over $\alpha$ we end up with,
\begin{align}
    \mu_{A^o}(\pi^{\alpha^o}_{A^o}) = \tr(\psi_{A^o} \pi^{\alpha^o}_{A^o}) = \tr[\tr_{BB^o}(Z_{AA^oBB^o})({\mathbb{1}_{A}}\otimes \pi^{\alpha^o}_{A^o})] \label{eq26}
\end{align}
where $\psi_{A^o}$ is the state given the Prover by the Verifier Alice. Since Eq. (\ref{eq26}) is true for all $\pi^{\alpha^o}_{A^o}$, we have,
\begin{align}
   \tr_{A} (\tr_{BB^o}(Z_{AA^oBB^o})) &= \psi_{A^o} \label{eq27}\\
   \tr_{A^o} (\tr_{BB^o}(Z_{AA^oBB^o})) &= \tr_B(W_{AB}) \text{ ~~ [From Eq. (\ref{eq24})}] \label{eq28}
\end{align}
Along with the fact that $\tr_{BB^o}(Z_{AA^oBB^o})$ and $\tr_B(W_{AB})$ are Positive operators, and that $\psi_{A^o}$ is a pure quantum state, equations (\ref{eq27}) and (\ref{eq28}) imply,
\begin{align}
    \tr_{BB^o}(Z_{AA^oBB^o}) &= \tr_B(W_{AB}) \otimes \psi_{A^o} \label{eq29}
\end{align}
A similar argument on Bob's side yeilds,
\begin{align}
    \tr_{AA^o}(Z_{AA^oBB^o}) &= \tr_A(W_{AB}) \otimes \psi_{B^o} \label{eq30}
\end{align}
Note that it is essential for the states $\psi_{A^o}$ and $\psi_{B^o}$ to be pure in order for equations (\ref{eq29}) and (\ref{eq30}) to hold. From equations (\ref{eq24}), (\ref{eq29}), and (\ref{eq30}) we conclude that 
\begin{align*}
    Z_{AA^oBB^o} = \psi_{A^o} \otimes W_{AB} \otimes \psi_{B_o}
\end{align*}
which is the required expression.

We mention here that the tensor product of a BQS and an entangled state may not necessarily be a valid POPT, as we shown below.

\begin{proposition}\label{propA2}
[Barnum {\it et al.};
\href{https://arxiv.org/abs/quant-ph/0507108}{arXiv:quant-ph/0507108}] For every BQS $W_{A_1\cdots A_n}$, there exists an entangled state $\rho^{s_1 \cdots s_n}_{A_1^o \cdots A_n^o}$ such that the tensor product state $W_{A_1\cdots A_n} \otimes \rho^{s_1 \cdots s_n }_{A_1^o \cdots A_n^o} $ is not a POPT sate.
\end{proposition}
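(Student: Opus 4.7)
The plan is to exhibit, for each BQS $W$, an explicit entangled density operator $\rho$ together with a product POVM on the joint system whose probability on $W\otimes\rho$ is strictly negative, thereby violating the POPT condition directly. The construction is essentially the dual of the one used in the proof of Theorem~\ref{theo1}: instead of the verifiers feeding separable pure inputs, they now share entanglement on the ancillas, and the same ``swap''-type projectors on the prover's side convert that shared entanglement into the ability to measure an entangled effect on $W$.

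First I would secure the entangled witness. Since $W_{A_1\cdots A_n}$ is Hermitian, unit-trace and non-positive, it admits at least one negative eigenvalue $\lambda<0$; the associated normalised eigenvector $|\chi\rangle$ cannot be a pure product vector across the $n$ parties, for otherwise POPTness of $W$ would force $\tr[W\,|\chi\rangle\langle\chi|]\ge 0$. Setting $\chi_{A_1\cdots A_n}:=|\chi\rangle\langle\chi|$ therefore yields an entangled pure density operator with $\tr[W\chi]=\lambda<0$, which is the same kind of witness already invoked in Theorem~\ref{theo1}.

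Next I would take the ancillary state to be $\rho_{A_1^o\cdots A_n^o}:=\chi^{\mathrm{T}}_{A_1^o\cdots A_n^o}$, where the transposition is performed in the computational basis locally on every factor. This $\rho$ is still a rank-one density operator (transposition preserves Hermiticity, positivity and trace) and it is still entangled, because a local transposition maps separable states to separable states and therefore preserves the separable/entangled dichotomy. Then, on the $i$-th party I would employ the two-outcome POVM $\{P^+_{A_iA_i^o},\,\mathbb{I}-P^+_{A_iA_i^o}\}$ used in Theorem~\ref{theo1}, with $P^+_{A_iA_i^o}=|\phi^+\rangle\langle\phi^+|_{A_iA_i^o}$. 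Applying the transpose identity $\tr_{A_i^o}[P^+_{A_iA_i^o}(\mathbb{I}_{A_i}\otimes X_{A_i^o})]=X^{\mathrm{T}}_{A_i}/d_{A_i}$ party by party reduces the joint probability to
\begin{equation*}
\tr\!\left[\Big(\bigotimes_{i=1}^n P^+_{A_iA_i^o}\Big)\big(W_{A_1\cdots A_n}\otimes\chi^{\mathrm{T}}_{A_1^o\cdots A_n^o}\big)\right]=\frac{\tr[W_{A_1\cdots A_n}\,\chi_{A_1\cdots A_n}]}{\prod_{i=1}^n d_{A_i}}<0.
\end{equation*}

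Because each $P^+_{A_iA_i^o}$ is a legitimate POVM element on the $i$-th party's joint system $A_iA_i^o$, this strictly negative product-effect probability is exactly a violation of the POPT condition for $W\otimes\rho$ with respect to the natural split $(A_1A_1^o)\,|\cdots|\,(A_nA_n^o)$, and so $W\otimes\rho$ cannot be a POPT state. The only step that requires genuine care is confirming that $\chi^{\mathrm{T}}$ remains a legitimate entangled density operator on $A_1^o\cdots A_n^o$; beyond that, the argument is a direct reuse of the transpose-trick calculation already spelled out in Theorem~\ref{theo1}.
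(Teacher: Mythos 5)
Your proposal is correct and follows essentially the same route as the paper: both take the eigenprojector $\chi$ of a negative eigenvalue of $W$ as the entanglement witness, choose $\rho=\chi^{\mathrm{T}}$ as the ancillary state, and apply the product measurement built from $P^{+}_{A_iA_i^o}$ to obtain the strictly negative probability $\tr[W\chi]/\prod_i d_{A_i}$. The only (cosmetic) difference is that you apply the partial-trace transpose identity directly by linearity, whereas the paper first expands $\chi^{\mathrm{T}}$ in a product basis of pure states and reuses the computation from the proof of Theorem~\ref{theo2}.
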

\begin{proof}
Given a BQS $W_{A_1\cdots A_n}$, let $\chi^{s_1 \cdots s_n }_{A_1^o \cdots A_n^o}$ be the eigen-projector corresponding to the negative eigenvalue in the spectral decomposition of $W_{A_1\cdots A_n}$. From Eq. (\ref{projector_basis_expansion}), $\chi_{A_1\cdots A_n}^\mathrm{T}= \sum_{s_1,\cdots,s_n}\beta_{s_1 \cdots s_n} \bigotimes_{i=1}^n {\psi^{s_i}_{A_i}}$. For the local measurements $\left\{P^{+}_{A_iA_i^o}, ~\mathbb{I}_{A_iA_i^o}-P^{+}_{A_iA_i^o}\right\}$ defined in Theorem 2, the probability of the occurrence of outcome $\ P^{+}_{A_1 A_1^o} \otimes \cdots \otimes P^{+}_{A_nA_n^o}$ for the tensor product state $ W_{A_1\cdots A_n} \otimes (\chi^{s_1 \cdots s_n}_{A_1^o \cdots A_n^o})^\mathrm{T} $ is, 
\begin{align}
&\tr\left[\left( P^{+}_{A_1 A_1^o} \otimes \cdots \otimes P^{+}_{A_nA_n^o} \right) \left( W_{A_1\cdots A_n} \otimes (\chi^{s_1 \cdots s_n}_{A_1^o \cdots A_n^o})^\mathrm{T} \right)\right] \nonumber\\ 
&= \sum_{s_1,\cdots,s_n}\beta_{s_1 \cdots s_n}\times\tr\left[\left( P^{+}_{A_1 A_1^o} \otimes \cdots \otimes P^{+}_{A_nA_n^o} \right) \left( W_{A_1\cdots A_n} \otimes \psi^{s_1}_{A_1^o} \otimes \cdots \otimes \psi^{s_n}_{A_n^o} \right)\right] \label{same_equation}\\
&< 0 \nonumber
\end{align}
The final inequality follows from the fact that Eq.(\ref{same_equation}) is the same as Eq.(\ref{same_eqn}), where it is shown that the RHS is negative. Since $W_{A_1\cdots A_n} \otimes (\chi^{s_1 \cdots s_n}_{A_1^o \cdots A_n^o})^\mathrm{T} $ gives negative probabilities upon local measurement, it is not a valid POPT state. Therefore, $(\chi^{s_1 \cdots s_n}_{A_1^o \cdots A_n^o})^\mathrm{T}$ is the required state  $\rho^{s_1 \cdots s_n}_{A_1^o \cdots A_n^o}$ for the proof of our Proposition.
\end{proof}

\begin{figure}[t!]
\centering
\includegraphics[width=0.4\textwidth]{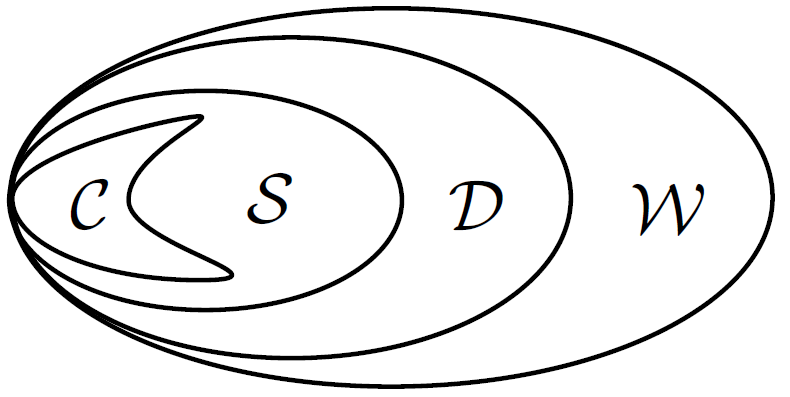}
\caption{Tensoring of POPT states. $\mathcal{W}$ is the set of all POPT states, $\mathcal{D}$ is the set of density operators, $\mathcal{S}$ is the set of separable states, and $\mathcal{C}$ is the set of all classical-classical states having zero discord \cite{Ollivier01}. While all the states in $\mathcal{W}$ allow local quantum description, interesting scenarios arise when tensoring of such states is considered [see Propositions (\ref{propA1}) and (\ref{propA2})]. %(i) $w\otimes\eta$ does not allow local quantum description when $w\in\mathcal{W}\setminus\mathcal{D}$ and $\eta\in\mathcal{W}\setminus\mathcal{S}$, (ii) $\sigma\otimes\rho$ allows local quantum description when $\sigma,\rho\in\mathcal{D}$, (iii) $w\otimes\phi$ allows local quantum description when $w\in\mathcal{W}$ and $\phi\in\mathcal{S}$.
While Barnum {\it et al.} considered the restricted scenario in which states were of the form $w\otimes c$ with $w\in\mathcal{W}$ and $c\in\mathcal{C}$, we have considered more general states of the form  $w\otimes\phi$ with $w\in\mathcal{W}$ and $\phi\in\mathcal{S}$.}\label{fig2}
\end{figure}

\section{Explicit construction of semiquantum game}
{\bf Special case:} Here we first construct a semiquantum game for the BQSs of form $W_p:=p\Gamma[\ket{\phi^+}\bra{\phi^+}]+(1-p)\mathbb{I}/4$. Clearly $W_p$ corresponds to a BQS if and only if $1/3<p\le1$. The entangled state $\ket{\psi^-}=(\ket{01}-\ket{10})/\sqrt{2}\in\mathbb{C}^2\otimes\mathbb{C}^2$ acts as a (beyond quantum) witness for this class of states. This evidently follows from the expression:
\begin{align}
W_p=&\frac{p}{2}\left[\ketbra{0}{0}\otimes\ketbra{0}{0}+\ketbra{1}{1}\otimes\ketbra{1}{1}+\ketbra{\psi^+}{\psi^+}-\ketbra{\psi^-}{\psi^-}\right]+\frac{1-p}{4}\mathbb{I}.
\end{align}
Now the state $\ket{\psi^-}$ allows the following decomposition:
\begin{align}
\ketbra{\psi^-}{\psi^-}&=\frac{1}{2}\left[P_{z}^\mathrm{T} P_{\bar{z}}^\mathrm{T}+P_{\bar{z}}^\mathrm{T} P_{z}^\mathrm{T}-\frac{1}{2}\left(P_{x}^\mathrm{T} P_{x}^\mathrm{T}-P_{\bar{x}}^\mathrm{T} P_{x}^\mathrm{T}-P_{x}^\mathrm{T} P_{\bar{x}}^\mathrm{T}+P_{\bar{x}}^\mathrm{T} P_{\bar{x}}^\mathrm{T}\right.\right.\nonumber\\
&~~~~~~~~~~~~~~~~~~~~~~~~~~~~~~~~~~~~~~~~~~\left.\left.+P_{y}^\mathrm{T} P_{y}^\mathrm{T}-P_{\bar{y}}^\mathrm{T} P_{y}^\mathrm{T}-P_{y}^\mathrm{T} P_{\bar{y}}^\mathrm{T}+P_{\bar{y}}^\mathrm{T} P_{\bar{y}}^\mathrm{T}\right) \right];
\label{q1}
\end{align}
where, $P_iP_j:=P_i\otimes P_j$ with $P_i$ being the projector onto the up eigenstate of $\sigma_i$ for $i\in\left\{x,y,z\right\}$ and it is the projector onto the down eigenstate for $i\in\left\{\bar{x},\bar{y},\bar{z}\right\}$. This immediately leads us to the required semiquantum game $\mathbb{G}_{sq}$. In each run of the game, referee randomly choose the states $\psi^s_{A^o}=P_s$ and $\psi^t_{B^o}=P_t$ and respectively sends them to Alice and Bob without revealing the indices $s$ and t, where $s,t\in\left\{x,y,z,\bar{x},\bar{y},\bar{z}\right\}$. Alice and Bob needs to return classical output $1$ to the referee and the average payoff will be calculated as $\mathcal{I}_{\mathbb{G}_{sq}}:=\sum_{s,t}\beta_{s,t}p(11|\psi^s_{A^o}\psi^t_{B^o})$, where 
\begin{align*}
\beta_{x,x}=\beta_{\bar{x},\bar{x}}=\beta_{y,y}&=\beta_{\bar{y},\bar{y}}=-\beta_{\bar{x},x}=-\beta_{x,\bar{x}}=-\beta_{\bar{y},y}=-\beta_{y,\bar{y}}=\frac{1}{4},\\
\beta_{z,\bar{z}}&=\beta_{\bar{z},z}=\frac{1}{2}~~\mbox{and}~~\beta_{s,t}=0~~\mbox{otherwise}.
\end{align*}
The winning condition demands Alice and Bob to generate a negative payoff. If Alice performs the measurement $\{P^{+}_{AA^o},\mathbb{I}_{AA^o}-P^{+}_{AA^o}\}$ on her part of the shared state $W_{p>1/3}$ and the quantum input $\psi^s_{A^o}$ received from the referee and if Bob also performs the same measurement $\{P^{+}_{BB^o},\mathbb{I}_{BB^o}-P^{+}_{BB^o}\}$ and send the outcome $1$ when the projector $P^{+}_{AA^o}/P^{+}_{BB^o}$ click, then we have
\begin{align*}
\mathcal{I}_{\mathbb{G}_{sq}}\left(W_p\right)
&=\frac{1}{d_Bd_A}\tr\left[\chi_{AB}W_p\right]=\frac{1}{4}\tr\left[\ketbra{\psi^-}{\psi^-}W_p\right]=\frac{1-3p}{16}<0~~\mbox{whenever}~p>1/3.
\end{align*}
On the other hand, for every quantum strategy $\mathcal{I}_{\mathbb{G}_{sq}}(\rho)\ge0$. It should be noted that the decomposition in Eq.(\ref{q1}) is not a unique. Considering a different decomposition it is possible to come up with a different semiquantum game. For instance, one has $\ketbra{\psi^-}{\psi^-}=\sum_{a,b=1}^4\beta_{ab}\psi^a\otimes \psi^b$ where $\{\beta_{ab}\}$ written in a matrix form are given by
\begin{align*}
[\beta_{ab}]:=\begin{bmatrix}
-15/64& ~17/64 & 1/2 & -1/32\\
~17/64& -15/64 & 1/2 & -1/32\\
~1/2  & ~1/2 & -1 & 0\\
-1/32 & -1/32 & 0 & 1/16
\end{bmatrix},
\end{align*}
and $\psi^1:=P_z,~\psi^2:=P_{\bar{z}},~\psi^3:=P_{x}~\mbox{and}~\psi^4:=P_{y}$. 

{\bf General Case:} We now provide an explicit procedure to construct a semiquantum game for any BQS $W_{A_1 \cdots A_n}\in\mathcal{W}\left(\otimes_i\mathcal{H}_{A_i}\right)$.

First note that for a $d$-dimensional Hilbert space $\mathbb{C}^d$ given an orthonormal basis $\left\{\ket{a}\right\}_{a=0}^{d-1}\subset\mathbb{C}^d$, one can construct a non-orthogonal operator basis ($\mathcal{B}^{proj}$) of Projectors from the orthogonal operator (computational) basis ($\mathcal{B}^{comp}$) as follows:
\begin{align*}
\mathcal{L}\left(\mathbb{C}^d\right)\supset\mathcal{B}^{comp} &:=\left\{\ketbra{a}{b}\right\}_{a,b=0}^{d-1};\\
\mathcal{L}\left(\mathbb{C}^d\right)\supset\mathcal{B}^{proj} &:= \{\ketbra{a}{a}\}_{a=0}^{~d-1}\bigcup \{P_1^{a,b},P_2^{a,b}\}_{a,b=0}^{d-1},~~a<b;
\end{align*}
where $P_1^{a,b} := \frac{1}{2} \left( \ketbra{a}{a} + \ketbra{b}{b} + \ketbra{a}{b} + \ketbra{b}{a} \right),~\&~ P_2^{a,b} := \frac{1}{2} \left( \ketbra{a}{a} + \ketbra{b}{b} + i\ketbra{a}{b} -i \ketbra{b}{a} \right)$. Notice that $\mathcal{B}^{proj}$ has $d^2$ linearly independent projectors with $d$ number of them common to $\mathcal{B}^{comp}$. If an operator is known in the $\mathcal{B}^{comp}$ basis then it can be easily written in the $\mathcal{B}^{proj}$ basis by making the following substitution:
\begin{align}
\ketbra{a}{b} =\begin{cases} 
P_1^{a,b}- i P_2^{a,b}-\frac{1-i}{2} \ketbra{a}{a}-\frac{1-i}{2} \ketbra{b}{b},~~ a< b;\\
\\
P_1^{b,a}+ i P_2^{b,a}- \frac{1+i}{2} \ketbra{a}{a} -\frac{1+i}{2} \ketbra{b}{b},~~ b< a. \label{comp_to_proj}
\end{cases}
\end{align}
Now, given an arbitrary beyond quantum state $W_{A_1 \cdots A_n}$, a semi quantum game can be constructed by mimicking the following steps: 
\begin{itemize}
\item[S1:] Write down the spectral decomposition of $W_{A_1 \cdots A_n}$. Hermiticity of $W_{A_1 \cdots A_n}$ guarantees that the eigenvalues are real. Since $W_{A_1 \cdots A_n}$ is a BQS, it has least one negative eigenvalue with entangled eigen-projector. Let the eigen-projector corresponding to a negative eigenvalue ($\lambda<0$) be $\chi_{A_1 \cdots A_n}$. Clearly,
\begin{align}
\tr[W_{A_1\cdots A_n}\chi_{A_1\cdots A_n}]& = \lambda<0, \nonumber\\
\tr[\sigma_{A_1\cdots A_n}\chi_{A_1\cdots A_n}]&\geq 0,~ \forall~\sigma_{A_1\cdots A_n}\in\mathcal{D}(\bigotimes_{i=1}^n\mathcal{H}_{A_i})\nonumber.
\end{align}
\item[S2:] Expand $\chi_{A_1 \cdots A_n}$ in the computational basis:
\begin{align*}
\chi_{A_1 \cdots A_n}= \sum_{a_i^1,\cdots,a_i^n,a_j^1,\cdots,a_j^n}\alpha_{a_i^1 \cdots a_i^n a_j^1 \cdots a_j^n} \ket{a_i^1}_{A_1}\bra{a_j^1} \otimes \cdots \otimes \ket{a_i^n}_{A_n}\bra{a_j^n}.
\end{align*}
Using Eq. (\ref{comp_to_proj}) we can write this as, 
\begin{align*} 
\chi_{A_1\cdots A_n}= \sum_{s_1,\cdots,s_n}\beta_{s_1 \cdots s_n} \bigotimes_{i=1}^n {\phi^{s_i}_{A_i}}, 
\end{align*}
where, ${\phi^{s_i}_{A_i}} \in \mathcal{B}^{proj}_{A_i}$. Since $\chi_{A_1 \cdots A_n}$ is Hermitian and ${\phi^{s_i}_{A_i}}$ are linearly independent, all the $\beta_{s_1 \cdots s_n}$ are real. Let ${\psi^{s_i}_{A_i}} \equiv {\phi^{s_i}_{A_i}}^\mathrm{T}$, where the transpose is taken in the computational basis. 
\begin{align}\label{projector_basis_expansion}
\chi_{A_1\cdots A_n}= \sum_{s_1,\cdots,s_n}\beta_{s_1 \cdots s_n} \bigotimes_{i=1}^n (\psi^{s_i}_{A_i})^\mathrm{T} 
\end{align}
In the semiquantum game, the referee gives one of the pure states ${\psi^{s_i}_{A_i^o}} \in \mathcal{B}^{proj^\mathrm{~T}}_{A_i^o}$ to the $i^{th}$ party in each run (See the proof of Theorem 2
).
\end{itemize}

\section{Proof of Theorem ${\bf 2}$} \label{proof_theo2}
\begin{proof}
This proof is a straightforward generalization of the proof of Theorem ${\bf 1}$. For every BQS $W_{A_1\cdots A_n}\in\mathcal{W}(\bigotimes_{i=1}^n\mathcal{H}_{A_i})$ there exists an entangled state $\chi_{A_1\cdots A_n}\in\mathcal{D}(\bigotimes_{i=1}^n\mathcal{H}_{A_i})$ such that $\tr[W_{A_1\cdots A_n}\chi_{A_1\cdots A_n}]<0$, whereas $\tr[\sigma_{A_1\cdots A_n}\chi_{A_1\cdots A_n}]\geq 0$, $\forall~\sigma_{A_1\cdots A_n}\in\mathcal{D}(\bigotimes_{i=1}^n\mathcal{H}_{A_i})$ \cite{Guhne09}. The state allows non-unique decomposition of the form
\begin{align}
\chi_{A_1\cdots A_n}= \sum_{s_1,\cdots,s_n}\beta_{s_1 \cdots s_n} \bigotimes_{i=1}^n {\psi^{s_i}_{A_i}}^\mathrm{T},~~
\mbox{where~}\psi^{s_i}_{A_i} \in\mathcal{D}(\mathcal{H}_{A_i})~~\&~~\beta_{s_1 \cdots s_n}\in\mathbb{R}. \nonumber
\end{align}
In the semiquantum game, referee sends the quantum inputs $\psi^{s_i}_{A_i^o}$ to the $i^{th}$ party who has to produce binary outputs $\in \{0,1\}$. Their average payoff will be calculated as 
\begin{align}
\mathcal{I}_{\mathbb{G}^\chi_{sq}}:=\sum_{s_1,\cdots,s_n}\beta_{s_1 \cdots s_n}\times p\left(1\cdots1|\psi^{s_1}_{A_1^o}\cdots\psi^{s_N}_{A_n^o}\right). \nonumber
\end{align}
Given the BQS $W_{A_1\cdots A_n}$, the $i^{th}$ party performs the measurement $\left\{P^{+}_{A_iA_i^o}, ~\mathbb{I}_{A_iA_i^o}-P^{+}_{A_iA_i^o}\right\}$ on her part of the shared BQS and the quantum input $\psi^{s_i}_{A_i^o}$ received from the referee.  Here $P_{A_i A_i^o}^{+}:=\ket{\phi^+}_{A_i A_i^o}\bra{\phi^+}$ with $\ket{\phi^+}_{A_i A_i^o}:=\frac{1}{\sqrt{d_{A_i}}}\sum^{d_{A_i}-1}_{i=0}|ii\rangle$ and $P^{+}_{A_iA_i^o}$ corresponds to the output $1$. The average payoff turns out to be, 
\begin{align}
\mathcal{I}_{\mathbb{G}^\chi_{sq}}\left(W_{A_1\cdots A_n}\right)&=\sum_{s_1,\cdots,s_n}\beta_{s_1 \cdots s_n}\times\tr\left[\left( P^{+}_{A_1 A_1^o} \otimes \cdots \otimes P^{+}_{A_nA_n^o} \right) \left( W_{A_1\cdots A_n} \otimes \psi^{s_1}_{A_1^o} \otimes \cdots \otimes \psi^{s_n}_{A_n^o} \right)\right] \label{same_eqn}\\
&=\sum_{s_1,\cdots,s_n}\beta_{s_1 \cdots s_n}\times\tr\left[\left( R_{A_1} \otimes  \cdots \otimes R_{A_n} \right)W_{A_1\cdots A_n}\right],\nonumber
\end{align} 
where $R_{A_i}$ is the effective POVM acting on the $i^{th}$ party's part of $W_{A_1\cdots A_n}$, and is given by
$R_{A_i}:=\tr_{A_i^o}\left[P^+_{A_iA_i^o}\left(\mathbb{I}_{A_i}\otimes \psi^{s_i}_{A_i^o}\right)\right]=\frac{1}{d_{A_i}}{\psi^{s_i}_{A_i}}^\mathrm{T}$. Therefore, we have,
\begin{align}
\mathcal{I}_{\mathbb{G}^\chi_{sq}}\left(W_{A_1\cdots A_n}\right)&=\prod_{i=1}^nd^{-1}_{A_i}\sum_{s_1,\cdots,s_n}\beta_{s_1 \cdots s_n}\times\tr\left[\left(\bigotimes_{i=1}^n {\psi^{s_i}_{A_i}}^\mathrm{T}\right)W_{A_1\cdots A_n}\right]\nonumber\\
&=\prod_{i=1}^nd^{-1}_{A_i}\tr\left[\left(\sum_{s_1,\cdots,s_n}\beta_{s_1 \cdots s_n} \bigotimes_{i=1}^n {\psi^{s_i}_{A_i}}^\mathrm{T} \right)W_{A_1\cdots A_n}\right]\nonumber\\
&=\prod_{i=1}^nd^{-1}_{A_i}\tr\left[\chi_{A_1\cdots A_n}W_{A_1\cdots A_n}\right]<0. \nonumber
\end{align}
We will now calculate the payoff for an arbitrary quantum strategy. Given a quantum state $\rho_{A_1 \cdots A_N}$ let the $i^{th}$ party perform the measurement $M_{A_iA_i^o}\equiv\{\pi^{a_i}_{A_iA_i^o}\}$ on her respective joint system, where $a_i \in\{0,1\}$. The average payoff turns out to be 
\begin{align}
\mathcal{I}_{\mathbb{G}^\chi_{sq}}\left(\rho_{A_1 \cdots A_n}\right)&=\sum_{s_1,\cdots,s_n}\beta_{s_1 \cdots s_n}\times\tr\left[ \left(\pi^1_{A_1A_1^o} \otimes \cdots \otimes \pi^1_{A_nA_n^o} \right)\left(\rho_{A_1 \cdots A_n} \otimes \psi^{s_1}_{A_1^o} \otimes \cdots \otimes \psi^{s_n}_{A_n^o} \right)\right]\nonumber\\
&=\sum_{s_1,\cdots,s_n}\beta_{s_1 \cdots s_n}\times\tr\left[R_{A_1^o \cdots A_n^o}\left(\bigotimes_{i=1}^n \psi^{s_i}_{A_i^o}\right)\right],\nonumber
\end{align} 
where, $R_{A_1^o \cdots A_n^o}:=\tr_{A_1 \cdots A_n}\left[\left(\pi^1_{A_1A_1^o} \otimes \cdots \otimes \pi^1_{A_n A_n^o}\right)\left(\rho_{A_1 \cdots A_n} \otimes \mathbb{I}_{A_1^o \cdots A_n^o}\right)\right]$
is a positive semidefinite operator, {\it i.e.} $R_{A_1^o \cdots A_n^o}\in\mathcal{E}\left(\bigotimes_{i=1}^n\mathcal{H}_{A_i^o}\right)$. Linearity of trace further yields,
\begin{align}
\mathcal{I}_{\mathbb{G}^\chi_{sq}}\left(\rho_{A_1 \cdots A_N}\right)&=\tr\left[R_{A_1^o \cdots A_n^o}\left(\sum_{s_1,\cdots,s_n}\beta_{s_1 \cdots s_N}\bigotimes_{i=1}^n\psi^{s_i}_{A_i^o}\right)\right]=\tr\left[R_{A_1^o \cdots A_n^o}~\chi^\mathrm{T}_{A_1^o \cdots A_n^o}\right]\geq0.
\end{align}
The last inequality follows from the fact that $\chi^\mathrm{T}_{A_1^o \cdots A_n^o}\in\mathcal{D}\left(\bigotimes_{i=1}^n\mathcal{H}_{A_i^o}\right)$, and this completes the proof.
\end{proof}

\section{Necessity of non-orthogonal inputs in Theorem ${\bf 1}$ and Theorem ${\bf 2}$}
In this section we will discuss the necessity of the non-orthogonal quantum inputs in the game $\mathbb{G}_{sq}$ used in Theorem ${\bf 1}$ and Theorem ${\bf 2}$. While paying the game $\mathbb{G}_{sq}$, let the $i^{th}$ party get the quantum input $\psi^{s_i}_{A^o_i} \in\mathcal{D}(\mathcal{H}_{A^o_i})$ and perform some joint measurement $M_{A_i^o A_i}\equiv \{\pi^{a_i}_{A_i^o A_i}\}$, where $a_i$ is the outcome corresponding to the POVM element $\pi^{a_i}_{A_i^o A_i}$. For the BQS $W_{A_1\cdots A_n}$, the joint probabilities are given by 
\begin{align}
p(a_1,\cdots,a_n|\psi^{s_1}_{A^o_1},\cdots,\psi^{s_n}_{A^o_n})&= \tr\left[\bigotimes_i\pi^{a_i}_{A_i^o A_i}\left(\bigotimes_i\psi^{s_i}_{A^o_i}\otimes W_{A_1\cdots A_n}\right)\right]\nonumber\\
&\equiv\tr\left[\left(\bigotimes_i Q^{a_i}_{A_i}[s_i]\right) W_{A_1\cdots A_n}\right],
\end{align}
where, $Q^{a_i}_{A_i}[s_i]:=\tr_{A_i^o}\left[\pi^{a_i}_{A_iA^o_i}(I_{A_i}\otimes \psi^{s_i}_{A^o_i})\right]\in\mathcal{E}\left(\mathcal{H}_{A_i}\right)$  effectively acts on $A_i$ subsystem of the shared state $W_{A_1\cdots A_n}$ when the quantum input $\psi^{s_i}_{A^o_i}$ is given by the referee. Since $\sum_{a_i}\pi^{a_i}_{A_i A^o_i}=\mathbb{I}_{A_iA^o_i}$, we have,
\begin{align}
\sum_{a_i}Q^{a_i}_{A_i}[s_i]=\tr_{A_i^o}\left[\left(\sum_{a_i}\pi^{a_i}_{A_iA^o_i}\right)\left(\mathbb{I}_{A_i}\otimes \psi^{s_i}_{A^o_i}\right)\right]=\tr_{A_i^o}\left[\mathbb{I}_{A_i}\otimes \psi^{s_i}_{A^o_i}\right]=\mathbb{I}_{A_i}.
\end{align}
Therefore, $M^{s_i}_{A_i}\equiv\left\{Q^{a_i}_{A_i}[s_i]\right\}$ is the effective measurement performed by the $A_i^{th}$ party on the shared state $W_{A_1\cdots A_n}$ when the quantum input $\psi^{s_i}_{A^o_i}$ is received by the $i^{th}$ party.

Let us now assume that the BQS is of the following form
\begin{align}
W_{A_1\cdots A_n}=\sum_kp_k\left(\Lambda^k_{A_1}\otimes \cdots \otimes \Lambda^k_{A_n}\right)\rho^k \label{W-form}
\end{align} 
where, $\rho^k\in \mathcal{D}(\otimes_i\mathcal{H}_{A_i})$, $\Lambda_i^k$ are positive trace-preserving maps, and $\{p_k\}$ is a probability distribution. In this case we have,
\begin{align}
p\left(a_1,\cdots,a_n|\psi^{s_1}_{A^o_1},\cdots,\psi^{s_n}_{A^o_n}\right)&=\tr\left[\left(\bigotimes_i Q^{a_i}_{A_i}[s_i]\right) \sum_kp_k\left(\bigotimes_i\Lambda^k_{A_i}\right)\rho^k\right]\nonumber\\
&=\sum_kp_k\tr\left[\left(\bigotimes_i Q^{a_i}_{A_i}[s_i]\right)\left(\bigotimes_i\Lambda^k_{A_i}\right)\rho^k\right]\nonumber\\
&=\sum_kp_k\tr\left[\left\{\bigotimes_i\Lambda_{A_i}^{*k}\left(Q^{a_i}_{A_i}[s_i]\right)\right\}\rho^k\right]\nonumber\\
&=\sum_kp_k\tr\left[\left(\bigotimes_i \tilde{Q}^{a_i}_{A_i}[s_i]\right)\rho^k\right] ,
\end{align}
where $\Lambda^*$ is the adjoint map of $\Lambda$, {\it i.e.} $\tr[U\Lambda(V)]=\tr[\Lambda^*(U)V]$ for all Hermitial matrices $U$ and $V$. Clearly $\tilde{M}^{s_i}_{A_i}\equiv \{\tilde{Q}^{a_i}_{A_i}[s_i]\}$ is a valid quantum measurement since the dual of a positive trace-preserving map is positive and unital. Therefore, for the class of BQSs given by Eq.(\ref{W-form}), whenever the input states are orthogonal, the correlations generated by the BQS can by simulated quantum mechanically as follows: \\ 
The $i^{th}$ party first performs a measurement to identify the index `$s_i$' of the given quantum state $\psi^{s_i}_{A^o_i}$ and then performs the measurement $\tilde{M}^{s_i}_{A_i}\equiv \{\tilde{Q}^{a_i}_{A_i}[s_i]\}$ on her part of the multipartite quantum state $\rho^k$. This generates the correlation $p(a_1,\cdots,a_n|\psi^{s_1}_{A^o_1},\cdots,\psi^{s_n}_{A^o_n})$ which was obtained by performing the local measurements
$M_{A_i^o A_i}\equiv \{\pi^{a_i}_{A_i^o A_i}\}$ on $\bigotimes_i\psi^{s_i}_{A^o_i}\otimes W_{A_1\cdots A_n}$. Note that if the inputs are orthogonal then the index `$s_i$' can be identified unambiguously. Therefore, when the BQSs are of the form (\ref{W-form}), non-orthogonal inputs are necessary to obtain the advantage of BQS over quantum states. This reproduces the results in \cite{Barnum10,Acin10}.

\twocolumngrid


\begin{thebibliography}{99}
	
\bibitem{Bell64} J.S. Bell; On the Einstein Podolsky Rosen paradox,
\href{https://doi.org/10.1103/PhysicsPhysiqueFizika.1.195}{Physics {\bf 1}, 195 (1964)}.	
	
\bibitem{Bell66} J. S. Bell; On the Problem of Hidden Variables in Quantum Mechanics,
\href{https://doi.org/10.1103/RevModPhys.38.447}{Rev. Mod. Phys. {\bf 38}, 447 (1966)}.	

\bibitem{Mermin93} N. D. Mermin; Hidden variables and the two theorems of John Bell,
\href{https://doi.org/10.1103/RevModPhys.65.803}{Rev. Mod. Phys. {\bf 65}, 803 (1993)}.

\bibitem{Brunner14} N. Brunner, D. Cavalcanti, S. Pironio, V. Scarani, and S. Wehner; Bell nonlocality,
\href{https://doi.org/10.1103/RevModPhys.86.419}{Rev. Mod. Phys. {\bf 86}, 419 (2014)}.

\bibitem{Schrodinger35} E. Schr\"{o}dinger; Discussion of Probability Relations Between Separated Systems, 
\href{https://doi.org/10.1017/s0305004100013554}{Math. Proc. Camb. Philos. Soc {\bf 31}, 555 (1935)}.

\bibitem{Barrett05} J. Barrett, L. Hardy, and A. Kent; No signalling and Quantum Key Distribution,
\href{https://doi.org/10.1103/PhysRevLett.95.010503}{Phys. Rev. Lett. {\bf 95}, 010503 (2005)};
\bibitem{Acin06} A. Acín, N. Gisin, and L. Masanes; From Bell’s Theorem to Secure Quantum Key Distribution,
\href{https://doi.org/10.1103/PhysRevLett.97.120405}{Phys. Rev. Lett. {\bf 97}, 120405 (2006)};
\bibitem{Acin07} A. Acín, N. Brunner, N. Gisin, S. Massar, S. Pironio, and V. Scarani; Device-Independent Security of Quantum Cryptography against Collective Attacks, 
\href{https://doi.org/10.1103/PhysRevLett.98.230501}{Phys. Rev. Lett. {\bf 98}, 230501 (2007)}.

\bibitem{Pironio10} S. Pironio {\it et al.} Random numbers certified by Bell’s theorem,
\href{https://doi.org/10.1038/nature09008}{Nature {\bf 464}, 1021 (2010)}. 
	
\bibitem{Colbeck12} R. Colbeck and R. Renner; Free randomness can be amplified,
\href{https://doi.org/10.1038/nphys2300}{Nature Phys {\bf 8}, 450 (2012)}. 

\bibitem{Brunner08} N. Brunner, S. Pironio, A. Acin, N. Gisin, A. A. Méthot, and V. Scarani; Testing the Dimension of Hilbert Spaces,
\href{https://doi.org/10.1103/PhysRevLett.100.210503}{Phys. Rev. Lett. {\bf 100}, 210503 (2008)};

\bibitem{Gallego10} 
R. Gallego, N. Brunner, C. Hadley, and A. Acín; Device-Independent Tests of Classical and Quantum Dimensions,
\href{https://doi.org/10.1103/PhysRevLett.105.230501}{Phys. Rev. Lett. {\bf 105}, 230501 (2010)}.

\bibitem{Mukherjee15} 
A. Mukherjee, A. Roy, S. S. Bhattacharya, S. Das, Md. R. Gazi, and M. Banik; Hardy's test as a device-independent dimension witness,
\href{https://doi.org/10.1103/PhysRevA.92.022302}{Phys. Rev. A {\bf 92}, 022302 (2015)}.

\bibitem{Cirelson80} B. S. Cirel'son; Quantum generalizations of Bell's inequality,
\href{https://doi.org/10.1007/BF00417500}{Lett. Math. Phys. {\bf 4}, 93 (1980)}. 

\bibitem{Popescu94} S. Popescu and D. Rohrlich; Quantum nonlocality as an axiom,
\href{https://doi.org/10.1007/BF02058098}{Found. Phys. {\bf 24}, 379 (1994)}.

\bibitem{Clauser69} J. F. Clauser, M. A. Horne, A. Shimony, and R. A. Holt; Proposed Experiment to Test Local Hidden-Variable Theories,
\href{https://doi.org/10.1103/PhysRevLett.23.880}{Phys. Rev. Lett. {\bf 23}, 880 (1969)}.

\bibitem{Birkhoff36} G. Birkhoff and J. von Neumann; The logic of quantum mechanics, 
\href{https://doi.org/10.2307/1968621}{Ann. Math. {\bf 37}, 823 (1936)};
\bibitem{Mackey63} 
G. W. Mackey; Mathematical Foundations of Quantum Mechanics. Benjamin, W. A. New York, 1963; Dover reprint, 2004;
\bibitem{Ludwi68} 
G. Ludwig; Attempt of an axiomatic foundation of quantum mechanics and more general theories II, III,
\href{https://doi.org/10.1007/BF01653647}{Commun. Math. Phys. {\bf 4}, 331 (1967)}; \href{https://doi.org/10.1007/BF01654027}{Commun. Math. Phys. {\bf 9}, 1 (1968)};
\bibitem{Mielni68}  
B. Mielnik; Geometry of quantum states, 
\href{https://doi.org/10.1007/BF01654032}{Commun. Math. Phys. {\bf 9}, 55 (1968)};
\bibitem{Beltramett81} 
E. Beltrametti and G. Cassinelli; The Logic of Quantum Mechanics, Addison-Wesley (1981);
\bibitem{Soler95} 
M. P. Sol{\`{e}}r; Characterization of hilbert spaces by orthomodular spaces, 
\href{https://doi.org/10.1080/00927879508825218}{Commun. Algebra {\bf 23}, 219 (1995)};
\bibitem{Haag96} 
R. Haag; Local Quantum Physics: Fields, Particles, Algebras,
2nd Revised and Enlarged version, Springer (1996);
\bibitem{Clifton03} 
R. Clifton,J. Bub, and H. Halvorson; Characterizing Quantum Theory in Terms of Information-Theoretic Constraints,
\href{https://doi.org/10.1023/A:1026056716397}{Found. Phys. {\bf 33}, 1561 (2003)};
\bibitem{Barrett07} 
J. Barrett; Information processing in generalized probabilistic theories,
\href{https://doi.org/10.1103/PhysRevA.75.032304}{Phys. Rev. A {\bf 75}, 032304 (2007)};
\bibitem{Abramsky08} 
S. Abramsky and B. Coecke; Categorical quantum mechanics, Handbook of Quantum
Logic and Quantum Structures vol II, Elsevier, Amsterdam (2008);
\bibitem{Chiribella11} 
G. Chiribella, G. Mauro D’Ariano, and P. Perinotti; Informational derivation of quantum theory,
\href{https://doi.org/10.1103/PhysRevA.84.012311}{Phys. Rev. A {\bf 84}, 012311 (2011)}.

\bibitem{Foulis80} D. Foulis and C. Randall, in Interpretations and Foundations of Quantum Theory, edited by H. Neumann
(Bibliographisches Institut Wissenschaftverlag, Mannheim, 1980), Vol. 5, pp. 9–20.

\bibitem{Klay87} M. Kläy, C. Randall, and D. Foulis; Tensor Products and Probability Weights, \href{https://doi.org/10.1007/bf00668911}{Int. J. Theor. Phys. {\bf 26}, 199 (1987)}.

\bibitem{Wallach00} N. R. Wallach; An Unentangled Gleason's Theorem,
\href{https://arxiv.org/abs/quant-ph/0002058}{arXiv:quant-ph/0002058}.

\bibitem{Barnum05} H. Barnum, C. A. Fuchs, J. M. Renes, and A. Wilce; Influence-free states on compound quantum systems,
\href{https://arxiv.org/abs/quant-ph/0507108}{arXiv:quant-ph/0507108}.

\bibitem{Barnum10} H. Barnum, S. Beigi, S. Boixo, M. B. Elliott, and S. Wehner; Local Quantum Measurement and No-Signaling Imply Quantum Correlations, \href{https://doi.org/10.1103/PhysRevLett.104.140401}{Phys. Rev. Lett. {\bf 104}, 140401 (2010)}.

\bibitem{Acin10} A. Acín, R. Augusiak, D. Cavalcanti, C. Hadley, J. K. Korbicz, M. Lewenstein, Ll. Masanes, and M. Piani; Unified Framework for Correlations in Terms of Local Quantum Observables,
\href{https://doi.org/10.1103/PhysRevLett.104.140404}{Phys. Rev. Lett. {\bf 104}, 140404 (2010)}.

\bibitem{Torre12} G. de la Torre, L. Masanes, A. J. Short, and M. P. Müller; Deriving Quantum Theory from Its Local Structure and Reversibility,
\href{https://doi.org/10.1103/PhysRevLett.109.090403}{Phys. Rev. Lett. {\bf 109}, 090403 (2012)}.

\bibitem{Kleinmann13} M. Kleinmann, T. J. Osborne, V. B. Scholz, and A. H. Werner; Typical Local Measurements in Generalized Probabilistic Theories: Emergence of Quantum Bipartite Correlations,
\href{https://doi.org/10.1103/PhysRevLett.110.040403}{Phys. Rev. Lett. {\bf 110}, 040403 (2013)}.

\bibitem{Gleason57} A. M. Gleason; Measures on the Closed Subspaces of a Hilbert Space,
\href{https://doi.org/10.1512/iumj.1957.6.56050}{J. Math. Mech. {\bf 6}, 885 (1957)}.

\bibitem{Busch03} P. Busch; Quantum States and Generalized Observables: A Simple Proof of Gleason’s Theorem,
\href{https://doi.org/10.1103/PhysRevLett.91.120403}{Phys. Rev. Lett. {\bf 91}, 120403 (2003)}.

\bibitem{Caves04} C. M. Caves, C. A. Fuchs, K. K. Manne, and J. M. Renes; Gleason-Type Derivations of the Quantum Probability Rule for Generalized Measurements, \href{https://doi.org/10.1023/B:FOOP.0000019581.00318.a5}{Found. Phys. {\bf 34}, 193 (2004)}.

\bibitem{Buscemi12} F. Buscemi; All Entangled Quantum States Are Nonlocal,
\href{https://doi.org/10.1103/PhysRevLett.108.200401}{Phys. Rev. Lett. {\bf 108}, 200401 (2012)}. 

\bibitem{Werner89} R. F. Werner; Quantum states with Einstein-Podolsky-Rosen correlations admitting a hidden-variable model,
\href{https://doi.org/10.1103/PhysRevA.40.4277}{Phys. Rev. A {\bf 40}, 4277 (1989)}.

\bibitem{Barrett02} J. Barrett; Nonsequential positive-operator-valued measurements on entangled mixed states do not always violate a Bell inequality,
\href{https://doi.org/10.1103/PhysRevA.65.042302}{Phys. Rev. A {\bf 65}, 042302 (2002)}.

\bibitem{Rai12} A. Rai, MD. R. Gazi, M. Banik, S. Das, and S. Kunkri; Local simulation of singlet statistics for restricted set of measurement,
\href{https://doi.org/10.1088/1751-8113/45/47/475302}{J. Phys. A: Math. Theor. {\bf 45}, 475302 (2012)}.


\bibitem{Kraus83} K. Kraus; States, Effects, and Operations: Fundamental Notions of Quantum Theory, Eds. A. Böhm, J. D. Dollard, and W. H. Wootters, Springer-Verlag Berlin Heidelberg (1983).

\bibitem{Guhne09} O. Gühne and G. Tóth; Entanglement detection,
\href{https://doi.org/10.1016/j.physrep.2009.02.004}{Phys. Rep {\bf 474}, 1 (2009)}.

%\bibitem{Supple} See supplemental file at https://---- which contains detailed proofs. It also contains few more relevant references \cite{Popescu94(1),Ollivier01,Henderson01}.

\bibitem{Branciard13} C. Branciard, D. Rosset, Y-C Liang, and N. Gisin; Measurement-Device-Independent Entanglement Witnesses for All Entangled Quantum States,
\href{https://doi.org/10.1103/PhysRevLett.110.060405}{Phys. Rev. Lett. {\bf 110}, 060405 (2013)}.

\bibitem{Banik13} M. Banik; Lack of measurement independence can simulate quantum correlations even when signalling can not,
\href{https://doi.org/10.1103/PhysRevA.88.032118}{Phys. Rev. A {\bf 88}, 032118 (2013)}.

\bibitem{Chaturvedi15} A. Chaturvedi and M. Banik; Measurement-device–independent randomness from local entangled states,
\href{https://doi.org/10.1209/0295-5075/112/30003}{EPL {\bf 112}, 30003 (2015)}.

\bibitem{Rosset20} D. Rosset, D. Schmid, and F. Buscemi; Type-Independent Characterization of Spacelike Separated Resources,
\href{https://doi.org/10.1103/PhysRevLett.125.210402}{Phys. Rev. Lett. {\bf 125}, 210402 (2020)}.

\bibitem{Schmid20} D. Schmid, D. Rosset, and F. Buscemi; The type-independent resource theory of local operations and shared randomness,
\href{https://doi.org/10.22331/q-2020-04-30-262}{Quantum {\bf 4}, 262 (2020)}.

\bibitem{Graffitti20} F. Graffitti, A. Pickston, P. Barrow, M. Proietti, D. Kundys, D. Rosset, M. Ringbauer, and A. Fedrizzi; Measurement-Device-Independent Verification of Quantum Channels,
\href{https://doi.org/10.1103/PhysRevLett.124.010503}{Phys. Rev. Lett. {\bf 124}, 010503 (2020)}.


\bibitem{Woronowicz76} S. L. Woronowicz; Positive maps of low dimensional matrix algebras,
\href{https://doi.org/10.1016/0034-4877(76)90038-0}{Rep. Math. Phys. {\bf 10}, 165 (1976)}.

\bibitem{Horodecki96} M. Horodecki, P. Horodecki, and R. Horodecki; Separability of mixed states: necessary and sufficient conditions,
\href{https://doi.org/10.1016/S0375-9601(96)00706-2}{Phys. Lett. A {\bf 223}, 1 (1996)}.

\bibitem{Stormer13} E. Størmer; Positive Linear Maps of Operator Algebras, Springer-Verlag Berlin Heidelberg (2013).

\bibitem{vanDam05} W. van Dam; Implausible Consequences of Superstrong Nonlocality,
\href{https://doi.org/10.1007/s11047-012-9353-6}{Nat. Comput. {\bf 12}, 9 (2013)}
[see also \href{https://arxiv.org/abs/quant-ph/0501159}{arXiv:quant-ph/0501159 (2005)}].

\bibitem{Buhrman10} 
H. Buhrman, R. Cleve, S. Massar, and R. de Wolf; Nonlocality and communication complexity,
\href{https://doi.org/10.1103/RevModPhys.82.665}{Rev. Mod. Phys. {\bf 82}, 665 (2010)}.


\bibitem{Pawlowski09} 
M. Pawłowski, T. Paterek, D. Kaszlikowski, V. Scarani, A. Winter, and M. Żukowski; Information causality as a physica principle,
\href{https://doi.org/10.1038/nature08400}{Nature {\bf 461}, 1101 (2009)}.

\bibitem{Navascues09} 
M. Navascues and H. Wunderlich; A glance beyond the quantum model,
\href{https://doi.org/10.1098/rspa.2009.0453}{Proc. Roy. Soc. Lond. A {\bf 466}, 881 (2009)}.

\bibitem{Fritz13} 
T. Fritz, A. B. Sainz, R. Augusiak, J. B. Brask, R. Chaves, A. Leverrier, and A. Acín; Local orthogonality as a multipartite principle for quantum correlations,
\href{https://doi.org/10.1038/ncomms3263}{Nat. Commun. {\bf 4}, 2263 (2013)}.

\bibitem{Cabello13} 
A. Cabello; Simple Explanation of the Quantum Violation of a Fundamental Inequality,
\href{https://doi.org/10.1103/PhysRevLett.110.060402}{Phys. Rev. Lett. {\bf 110}, 060402 (2013)}.

\bibitem{Oppenheim10} 
J. Oppenheim and S. Wehner; The uncertainty principle determines the non-locality of quantum mechanics,
\href{https://doi.org/10.1126/science.1192065}{Science {\bf 330}, 1072 (2010)}.

\bibitem{Banik13(1)} 
M. Banik, Md. R. Gazi, S. Ghosh, and G. Kar; Degree of complementarity determines the nonlocality in quantum mechanics,
\href{https://doi.org/10.1103/PhysRevA.87.052125}{Phys. Rev. A {\bf 87}, 052125 (2013)}.

\bibitem{Banik15} 
M. Banik, S. S. Bhattacharya, A. Mukherjee, A. Roy, A. Ambainis, and A. Rai; Limited preparation contextuality in quantum theory and its relation to the Cirel'son bound,
\href{https://doi.org/10.1103/PhysRevA.92.030103}{Phys. Rev. A {\bf 92}, 030103(R) (2015)}.

\bibitem{Kar16(1)} 
G. Kar and M. Banik; Several foundational and information theo- retic implications of Bell’s theorem; 
\href{https://doi.org/10.1142/S021974991640027X}{Int. J. Quantum Inform. {\bf 14}, 1640027 (2016)}.

\bibitem{Kar16(2)} 
G. Kar, S.Ghosh, S. K. Choudhary, and M. Banik; Role of Measurement Incompatibility and Uncertainty in Determining Nonlocality,
\href{https://doi.org/10.3390/math4030052}{Mathematics {\bf 4}, 52 (2016)}.

\bibitem{Banik19} 
M. Banik, S. Saha, T. Guha, S. Agrawal, S. S. Bhattacharya, A. Roy, and A. S. Majumdar; Constraining the state space in any physical theory with the principle of information symmetry,
\href{https://doi.org/10.1103/PhysRevA.100.060101}{Phys. Rev. A {\bf 100}, 060101(R) (2019)}.

\bibitem{Bhattacharya20} 
S. S. Bhattacharya, S. Saha, T. Guha, and M. Banik; Nonlocality without entanglement: Quantum theory and beyond,
\href{https://doi.org/10.1103/PhysRevResearch.2.012068}{Phys. Rev. Research {\bf 2}, 012068(R) (2020)}.

\bibitem{DallArno17} M. Dall’Arno, S. Brandsen, A. Tosini, F. Buscemi, and V. Vedral; No-Hypersignalling Principle,
\href{https://doi.org/10.1103/PhysRevLett.119.020401}{Phys. Rev. Lett. {\bf 119}, 020401 (2017)}.

\bibitem{Naik21} S. G. Naik, E. P. Lobo, S. Sen, R. Patra, M. Alimuddin, T. Guha, S. S. Bhattacharya, and M. Banik; Composition of multipartite quantum systems: perspective from time-like paradigm,
\href{https://doi.org/10.1103/PhysRevLett.128.140401}{Phys. Rev. Lett. {\bf 128}, 140401 (2022)}.

\bibitem{Sen2022} S. Sen, E. P. Lobo, R. K. Patra, S. G. Naik, A. D. Bhowmik, M. Alimuddin, and M. Banik; Timelike correlations and quantum tensor product structure,
\href{https://doi.org/10.48550/arXiv.2208.02471}{arXiv:2208.02471}.

\bibitem{Popescu94(1)} S. Popescu; Bell’s inequalities versus teleportation: What is nonlocality?
\href{https://doi.org/10.1103/PhysRevLett.72.797}{Phys. Rev. Lett. {\bf 72}, 797 (1994)}.

\bibitem{Ollivier01} H. Ollivier and W. H. Zurek; Quantum Discord: A Measure of the Quantumness of Correlations,
\href{https://doi.org/10.1103/PhysRevLett.88.017901}{Phys. Rev. Lett. {\bf 88}, 017901 (2001)}

\bibitem{Henderson01} L. Henderson and V. Vedral; Classical, quantum and total correlations,
\href{https://doi.org/10.1088/0305-4470/34/35/315}{J. Phys. A: Math. Gen. {\bf 34}, 6899 (2001)}.

\end{thebibliography}
\end{document}